\journal{Elsevier Systems \& Control Letters. Published article \cite{BONASSI2021105049} available at \href{https://doi.org/10.1016/j.sysconle.2021.105049}{doi.org/10.1016/j.sysconle.2021.105049}}
\newtheorem{assumption}{Assumption}
\newtheorem{theorem}{Theorem}
\newtheorem{lemma}{Lemma}
\newtheorem{proposition}{Proposition}
\newtheorem{remark}{Remark}
\newtheorem{definition}{Definition}
\newtheorem{corollary}{Corollary}
\newcommand{\ubar}[1]{\underaccent{\bar}{#1}}
\newcommand{\sss}[1]{_{\scriptscriptstyle #1}}
\newcommand{\uss}[1]{^{\scriptscriptstyle #1}}
\begin{document}

\begin{frontmatter}
\title{On the stability properties of Gated Recurrent Units neural networks\tnoteref{copyright}\tnoteref{tfunds}}

\tnotetext[copyright]{© 2021. This manuscript version is made available under the CC-BY-NC-ND 4.0.}
\tnotetext[tfunds]{This research did not receive any specific grant from funding agencies in the public, commercial, or not-for-profit sectors.}

\author[1]{Fabio Bonassi\corref{corr1}} 
\ead{fabio.bonassi@polimi.it}
	
\author[1]{Marcello Farina}

\author[1]{Riccardo Scattolini}

\address[1]{Dipartimento di Elettronica, Informatica e Bioingegneria, Politecnico di Milano, 20133 Milano, Italy}
\cortext[corr1]{Corresponding author}

\begin{abstract}
The goal of this paper is to provide sufficient conditions for guaranteeing the Input-to-State Stability (ISS) and the Incremental Input-to-State Stability ($\delta$ISS) of Gated Recurrent Units (GRUs) neural networks.
These conditions, devised for both single-layer and multi-layer architectures, consist of nonlinear inequalities on network's weights. They can be employed to check the stability of trained networks, or can be enforced as constraints during the training procedure of a GRU.
The resulting training procedure is tested on a Quadruple Tank nonlinear benchmark system, showing remarkable modeling performances.
\end{abstract} 

\begin{keyword}
Neural Networks, Gated Recurrent Units, Input-to-State Stability, Incremental Input-to-State Stability.
\end{keyword}

\end{frontmatter}

\section{Introduction}
Neural Networks (NNs) have gathered increasing attention from the control systems community.
The approximation capabilities of NNs \cite{hornik1989multilayer, schafer2006recurrent}, the spread of reliable tools to train, test and deploy them, as well as the availability of large amounts of data, collected from the plants in various operating conditions, fostered the adoption of NNs in data-driven control applications. 

A standard approach is to use a NN to identify a dynamical system and then, based on such model, design a regulator by means of traditional model-based control strategies.
In particular, Model Predictive Control (MPC) can be adopted in combination with such models, as it allows to cope with nonlinear system models and it can guarantee the closed-loop stability, even in presence of input constraints.
For identification purposes, Feed-Forward Neural Networks (FFNNs) were initially adopted  \cite{hunt1992neural, levin1993control}, thanks to their simple structure and easy training.
FFNNs have been soon abandoned due to their structural lack of memory, which prevents them from achieving accurate long-term predictions.
To overcome these limits, Recurrent Neural Networks have been introduced: in particular, among the wide variety of recurrent architectures, very promising ones for system identification are Long-Short Term Memory networks (LSTMs, \cite{hochreiter1997long}), Echo State Networks (ESNs, \cite{jaeger2002tutorial}), and Gated Recurrent Units (GRUs, \cite{cho2014learning}), see \cite{mohajerin2019multistep, rehmer2019using, wu2019machine, wu2021machine,  ogunmolu2016nonlinear}.

Owing to their remarkable modeling performances, these architectures enjoy broad applicability, for example in chemical \cite{atuonwu2010identification} and pharmaceutical process control \cite{wong2018recurrent}, manufacturing plants management \cite{lanzetti2019recurrent}, buildings' HVAC optimization \cite{terzi2020learning}.
However, despite their popularity among the practitioners, only little theoretical results are available on Recurrent NNs.
In \cite{stipanovic2018someoriginal} and \cite{deka2019global}, the stability of autonomous LSTMs is studied, but they do not account for manipulable inputs. 
Similarly, in \cite{stipanovic2020stability} stability considerations have been recently carried out for autonomous GRUs.
Miller and Hardt, in \cite{miller2018stable}, provided sufficient conditions for the  stability of recurrent networks, stated as inequalities on network's parameters.

For some recurrent architectures, more advanced stability properties have been recently studied, namely the Input-to-State Stability (ISS) \cite{jiang2001input} and the Incremental Input-to-State Stability ($\delta$ISS) \cite{bayer2013discrete}.
The ISS property guarantees that, regardless of the initial conditions, bounded inputs or disturbances lead to bounded network's states. 
In \cite{bonassi2019lstm} the authors derived a sufficient condition under which LSTMs are guaranteed to be ISS, inferring the boundedness of the network's output reachable set. 
This boundedness has thus been leveraged to perform a probabilistic safety verification of the network.
The $\delta$ISS property is a {stronger property than the ISS and implies that, asymptotically, the closer are two input sequences applied to the network, the smaller is the bound on the maximum distance between the resulting state trajectories.}
In \cite{terzi2021learning} the authors provided sufficient conditions for the $\delta$ISS of LSTMs, exploiting this property to design a converging state observer and a stabilizing MPC control law.
Both stability properties are also useful, among other applications, for Robust MPC  \cite{bayer2013discrete} and Moving Horizon Estimators \cite{alessandri2008moving} design.
Analogous stability conditions have been retrieved for ESNs in \cite{armenio2019model}.

To the best of authors' knowledge, no theoretical result is currently available concerning the ISS and $\delta$ISS of GRUs.
We believe that this gap needs to be filled, as GRUs -- although simpler -- achieve comparable, or even superior, results with respect to LSTMs when it comes to modeling dynamical systems \cite{rehmer2019using, ogunmolu2016nonlinear, bianchi2017overview}.

The purpose of this paper is twofold. 
First GRUs are recast in state-space form, and sufficient conditions for their ISS and $\delta$ISS are retrieved, both for single-layer and deep (i.e. multi-layer) networks.
These conditions come in the form of nonlinear inequalities on network's weights, and  can be employed to certify the stability of a trained network, or can be enforced as constraints during the training procedure to guarantee the stability of the GRU.
{Thus, when the system to be learned enjoys the ISS or $\delta$ISS property, it is possible to ensure the consistency of the GRU model to the actual system. 
The model's stability can then be leveraged during the controller design phase.}
Secondly, this approach is tested on a Quadruple Tank nonlinear benchmark system \cite{alvarado2011comparative}, showing satisfactory performances.
Guidelines are also provided about how the training procedure of these stable GRUs can be carried out in a common environment, TensorFlow, which does not support  constrained training.

This paper is organized as follows. In Section \ref{sec:model} the state-space model of GRUs is formulated, and the existence of an invariant set for network's states is shown.
The ISS and $\delta$ISS properties of these networks are then studied in Section \ref{sec:single}, and the results are extended to deep GRUs in Section \ref{sec:deep}.
In Section \ref{sec:example} the proposed method is tested on the Quadruple Tank benchmark system.

\medskip
\noindent \textbf{Notation and preliminaries}
In the paper we adopt the following notation. Given a vector $v$, we denote by $v^{\prime}$ its transpose, by $\| v \|$ its Euclidean norm and by $\| v \|_\infty$ its infinity-norm. 
The $j$-th component of $v$ is indicated by $v_j$, and its absolute value by $\lvert v_j \lvert$. 
Boldface indicates a sequence of vectors, i.e. $\bm{v} = \{ v(0), v(1), ...\}$, where $\| \bm{v} \|_p = \max_{k \geq 0} \| v(k) \|_p$. 
If $v_a$ and $v_b$ are two distinct vectors,  $v_{\{a, b\}} \in \mathcal{V}$ is used to indicate that  $v_a \in \mathcal{V}$ and $v_b \in \mathcal{V}$. By extension, an inequality containing $v_{\{a, b\}}$ is intended to hold both by $v_a$ and $v_b$.
For multi-layer networks the superscript $i$, e.g. $v^i$, denotes a quantity referred to the $i$-th layer.
For conciseness, the discrete-time instant $k$ may be dropped in no ambiguity occurs, and $v^+$ may be used to denote the value of vector $v$ at time $k+1$.
The Hadamard (i.e. element-wise) product between $u$ and $v$ is indicated by $u \circ v$.
The sigmoid and hyperbolic tangent activation functions are respectively denoted by $\sigma(x) = \frac{1}{1 + e^{-x}}$ and $\phi(x) = \text{tanh}(x)$.
If the argument of $\phi$ and $\sigma$ is a vector, these activation functions are intended to be applied element-wise.
{For a given $n \times m$ matrix $A$, we denote by $\| A \|_\infty$ its induced $\infty$-norm, which is defined as
\begin{equation*}
    \| A \|_\infty = \max_v \frac{\| A v \|_\infty}{\|v \|_\infty} = \max_{1 \leq i \leq n} \sum_{j=1}^m \lvert a_{ij} \lvert,
\end{equation*}
where $a_{ij}$ denotes the element of $A$ in position $i, j$. 
The infinity norm satisfies the homogeneity condition, i.e. for any scalar $\alpha$ it holds that $\| \alpha A \|_\infty = \lvert \alpha \lvert \| A \|_\infty$. 
Moreover, given a matrix $B$ of suitable shape, it holds that $\| A + B \|_\infty \leq \| A \|_\infty + \| B \|_\infty$, and $\| A \cdot B \|_\infty \leq \| A \|_\infty \, \| B \|_\infty$.
Finally, for any vector $v \in \mathbb{R}^m$, $\| A v \|_\infty \leq \| A \|_\infty \, \| v \|_\infty$.}

\section{Single-layer GRU model} \label{sec:model}
Let us consider the following neural network, obtained combining a single GRU layer, as defined in \cite{cho2014learning}, and a linear output transformation
\begin{equation} \label{eq:model:gru}
	\left\{ \begin{array}{l}
		x^+ = z\circ x + (1 - z) \circ \phi\left( W_r \, u + U_r \, f \circ x + b_r \right) \\
		z = \sigma \left( W_z \, u + U_z \, x + b_z \right) \\
		f = \sigma \left( W_f \, u + U_f \, x + b_f \right) \\
		y = U_o \, x + b_o
	\end{array}\right.,
\end{equation}
where $x \in \mathbb{R}^{n_x}$ is the state vector, $u \in \mathbb{R}^{n_u}$ is the input vector, $y \in \mathbb{R}^{n_o}$ is the output vector. 
Moreover, $z = z(u, x)$ is called \emph{update} gate, and $f = f(u, x)$ is known as \emph{forget} gate. 
The matrices $W_{\star}$, $U_\star$, and $b_\star$, are the weights and biases that parametrize the model.
\begin{assumption} \label{ass:u_bounded}
	The input $u$ is unity-bounded
	\begin{equation} \label{eq:model:u_bound}
		u \in \mathcal{U} \subseteq \left[-1, 1\right]^{n_u},
	\end{equation}
	i.e. $\| u \|_\infty \leq 1$.
\end{assumption}
Note that this assumption is quite customary when dealing with neural networks, see e.g. \cite{goodfellow2016deep}, and can be easily satisfied by means of a suitable normalization of the input vector.
Before stating the first, instrumental, theoretical results of this paper, let us remind that $\sigma$ and $\phi$ are bounded as follows
\begin{subequations} \label{eq:model:sigma_bounds}
	\begin{align}
		\sigma(\cdot) & \in (0, 1), \\
		\phi(\cdot) & \in (-1, 1),
	\end{align}
\end{subequations}
and that they are Lipschitz-continuous with Lipschitz coefficients $L_\sigma = \frac{1}{4}$ and $L_\phi = 1$, respectively \cite{miller2018stable}.
Hereafter, for the sake of compactness, the following notation may be used
\begin{equation} \label{eq:model:r_def}
	r(u, x) = \phi\left( W_r \, u + U_r \, f \circ x + b_r \right).
\end{equation}
The following preliminary results are instrumental for the reminder of the paper.
\begin{lemma} \label{lemma:single:invset_ini}
	$\mathcal{X} = [-1, 1]^{n_x}$ is an invariant set of the state $x$ of system \eqref{eq:model:gru}, i.e. for any input $u$
	$$ x(k) \in \mathcal{X} \, \Rightarrow \, x(k+1) \in \mathcal{X}. $$ 
\end{lemma}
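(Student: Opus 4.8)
The plan is to argue componentwise, exploiting the fact that the update map in \eqref{eq:model:gru} expresses each new state entry as a convex combination of the current state entry and a bounded activation output. First I would fix an arbitrary input $u$ and a state $x(k) \in \mathcal{X}$, so that $\lvert x_j(k) \rvert \leq 1$ for every $j$. Writing the first line of \eqref{eq:model:gru} componentwise, with the shorthand \eqref{eq:model:r_def}, the $j$-th entry of the successor state reads
\[
	x_j(k+1) = z_j \, x_j(k) + (1 - z_j) \, r_j(u, x),
\]
where $z_j$ and $r_j$ denote the $j$-th components of the update gate $z$ and of $r(u,x)$, respectively.

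Next I would invoke the bounds \eqref{eq:model:sigma_bounds}. Since $z = \sigma(\cdot)$, each component satisfies $z_j \in (0,1)$; and since $r(u,x) = \phi(\cdot)$, each component satisfies $r_j \in (-1,1) \subset [-1,1]$. The key observation is then that $x_j(k+1)$ is a convex combination of the two scalars $x_j(k)$ and $r_j$, with nonnegative weights $z_j$ and $1 - z_j$ summing to one.

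Finally, I would conclude that a convex combination of two points lying in $[-1,1]$ must itself lie in $[-1,1]$. Indeed, using $x_j(k) \leq 1$ and $r_j \leq 1$ gives $x_j(k+1) \leq z_j + (1 - z_j) = 1$, and analogously $x_j(k) \geq -1$ and $r_j \geq -1$ give $x_j(k+1) \geq -1$. Since $j$ was arbitrary, $\| x(k+1) \|_\infty \leq 1$, i.e. $x(k+1) \in \mathcal{X}$, which proves the claimed invariance.

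As for the main obstacle, there is essentially none of a technical nature: the result follows directly from the sigmoid and hyperbolic tangent bounds \eqref{eq:model:sigma_bounds} together with the convex-combination structure of the state update. The only point requiring care is to carry out the reasoning componentwise, because the Hadamard products in \eqref{eq:model:gru} couple the gate and the state entrywise rather than through matrix multiplication; writing the recursion scalar-by-scalar is what makes the convex-combination argument transparent. Note also that the argument does not use Assumption \ref{ass:u_bounded}, since the bounds on $z$ and $r$ hold for any input.
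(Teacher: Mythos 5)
Your proof is correct and follows essentially the same route as the paper's: both recast the state update componentwise as a convex combination of $x_j(k)$ and the bounded term $r_j \in (-1,1)$ with weights $z_j$ and $1-z_j$, then conclude invariance of $[-1,1]$ for each component. Your version merely spells out the upper/lower bound inequalities and the (correct) observation that Assumption \ref{ass:u_bounded} is not needed, details the paper leaves implicit.
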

\begin{proof}
	Consider the $j$-th	state, and let $\omega_j(k) = z(u(k), x(k))_j$, and $\eta_j(k) = r(u(k), x(k))_j$. 
	It is possible to recast \eqref{eq:model:gru} as a Linear Parameter Varying (LPV) system
	\begin{equation} \label{eq:model:proof_invset_ini:ltv}
		x_j(k+1) = \omega_j(k) \, x_j(k) + \big(1 - \omega_j(k) \big) \, \eta_j(k).
	\end{equation}
	Then, in light of \eqref{eq:model:sigma_bounds}, it holds that $\omega_j(k) \in (0, 1)$ and $\eta_j(k) \in (-1, 1)$.
	Since $x_j(k+1)$ is a convex combination of two quantities  bounded in $[-1, 1]$, it follows that $x_j(k+1) \in [-1, 1]$.
	Thus, $x(k+1) \in \mathcal{X}$.
\end{proof}

\begin{lemma} \label{lemma:single:invset}
	For any arbitrary initial state $\bar{x} \in \mathcal{R}^{n_x}$,
	\begin{enumerate}[i.]
	    \itemsep0em
	    \item if $\bar{x} \notin \mathcal{X}$, $\| x(k) \|_\infty$ is strictly decreasing until $x(k) \in \mathcal{X}$;
	    \item the convergence happens in finite time, i.e. there exists a finite $\bar{k} \geq 0$ such that $x(k) \in \mathcal{X}, \, \forall k \geq \bar{k}$;
	    \item each state component $x_j$ converges into its invariant set $[-1, 1]$ in an exponential fashion.
	\end{enumerate}
\end{lemma}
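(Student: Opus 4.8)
The plan is to work entirely at the level of the scalar, component-wise LPV recursion \eqref{eq:model:proof_invset_ini:ltv}, with $\omega_j(k)$ the update-gate factor and $\eta_j(k) = r(u(k),x(k))_j$ as introduced in the proof of Lemma \ref{lemma:single:invset_ini}. For a component lying above the set I would write $e_j(k) = x_j(k) - 1$ (the case $x_j(k) < -1$ being symmetric, with the deficit $-1 - x_j(k)$) and recast \eqref{eq:model:proof_invset_ini:ltv} in the ``excess'' form
\begin{equation*}
    x_j(k+1) - 1 = \omega_j(k)\big(x_j(k) - 1\big) + \big(1 - \omega_j(k)\big)\big(\eta_j(k) - 1\big).
\end{equation*}
By \eqref{eq:model:sigma_bounds} the first coefficient lies in $(0,1)$ and the second summand is strictly negative, since $\eta_j < 1$. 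Hence, whenever $x_j(k) > 1$, one gets $x_j(k+1) - 1 < \omega_j(k)\,(x_j(k)-1) < x_j(k)-1$, while $x_j(k+1) > 2\omega_j(k) - 1 > -1$ follows from $\omega_j(k) \in (0,1)$ and $\eta_j(k) > -1$; together these give $|x_j(k+1)| < |x_j(k)|$. For components already satisfying $|x_j(k)| \le 1$, the convex-combination argument of Lemma \ref{lemma:single:invset_ini} yields $|x_j(k+1)| \le 1$. Taking the maximum over the finitely many $j$ proves claim (i): as long as $\|x(k)\|_\infty > 1$, every component is either strictly contracted or already bounded by $1$, so $\|x(k+1)\|_\infty < \|x(k)\|_\infty$. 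In particular $\|x(k)\|_\infty \le \|\bar x\|_\infty =: X_0$ for all $k$.

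The crux is to upgrade this monotonicity into \emph{uniform} constants, and this is where the main difficulty lies: since $\sigma(\cdot)\to 1$ and $\phi(\cdot)\to 1$, the bounds \eqref{eq:model:sigma_bounds} alone do not prevent $\omega_j$ from approaching $1$, which would defeat both the rate and the finite-time claims. The a priori bound $\|x(k)\|_\infty \le X_0$ resolves this. Using Assumption \ref{ass:u_bounded} and the induced-norm inequalities recalled in the preliminaries, the argument of the update gate satisfies $|(W_z u + U_z x + b_z)_j| \le \|W_z\|_\infty + \|U_z\|_\infty X_0 + \|b_z\|_\infty =: A$, so by monotonicity of $\sigma$ one has $\omega_j(k) \le \sigma(A) =: \bar\omega < 1$. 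An identical estimate on the argument of $r$ in \eqref{eq:model:r_def}, bounding $f \circ x$ by $\|x\|_\infty \le X_0$ since $f \in (0,1)$, gives $\eta_j(k) \le \phi(A') =: \bar\eta < 1$ for a suitable $A'$ (and, by oddness of $\phi$, $\eta_j(k) \ge -\bar\eta$ for the deficit case).

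With these uniform constants the remaining two claims are immediate. Dropping the negative second summand in the excess recursion yields $e_j(k+1) \le \bar\omega\, e_j(k)$, hence $e_j(k) \le \bar\omega^{\,k} e_j(0)$: each out-of-set component approaches $[-1,1]$ exponentially, proving (iii). Retaining that summand and lower-bounding it, $(1-\omega_j)(1-\eta_j) \ge (1-\bar\omega)(1-\bar\eta) =: \delta > 0$, gives instead a uniform \emph{additive} decrease $e_j(k+1) \le e_j(k) - \delta$, so $e_j$ reaches zero — and the component enters $[-1,1]$, where it stays by claim (i) — after at most $\lceil e_j(0)/\delta \rceil$ steps; the largest such bound over the finitely many components furnishes the finite $\bar k$ of claim (ii). I expect the only genuinely delicate point to be the one flagged above: obtaining $\bar\omega < 1$ (and $\bar\eta < 1$) requires first establishing the global-in-time state bound $X_0$ from claim (i), rather than relying on the pointwise bounds \eqref{eq:model:sigma_bounds}.
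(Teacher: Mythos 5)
Your proof is correct, and its skeleton coincides with the paper's: both work on the scalar LPV recursion \eqref{eq:model:proof_invset_ini:ltv}, first establish the strict decrease of out-of-set components using only the pointwise bounds \eqref{eq:model:sigma_bounds} (your lower bound $x_j(k+1) > 2\omega_j(k)-1 > -1$ plays the role of the paper's inequality chain on $\lvert x_j(k+1)\lvert - \lvert x_j(k) \lvert$), then exploit the resulting global-in-time state bound $\max(\|\bar x\|_\infty,1)$ --- the paper's \eqref{eq:single:invset:x_lessthan_x0} --- to freeze the gate bounds at their initial-time values, exactly your $\bar\omega = \sigma(A) < 1$ and $\bar\eta = \phi(A') < 1$ (the paper's $\bar{\omega}(0)$ and $1-\ubar{\varepsilon}(0)$ in \eqref{eq:single:invset:bounds_k}), and finally obtain finite-time entry from a uniform additive per-step decrease, your $\delta = (1-\bar\omega)(1-\bar\eta)$ matching the paper's $\ubar{\omega}(0)\,\ubar{\varepsilon}(0)$. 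The one genuine divergence is claim (iii): the paper decomposes the solution into free and forced responses, $x_j = x_{\alpha j} + x_{\beta j}$ as in \eqref{eq:model:proof_invset:motions:forced}, and bounds $\lvert x_j(k) \lvert \leq (\bar{\omega}(0))^k \lvert \bar x_j \lvert + \big[1-(\ubar{\omega}(0))^k\big](1-\ubar{\varepsilon}(0))$, whereas you run the recursion on the signed excess $e_j = x_j - 1$ and simply drop the negative forcing term to get $e_j(k) \leq \bar\omega^{\,k} e_j(0)$. Your route is more elementary and bounds the distance to $[-1,1]$ geometrically in one line; the paper's superposition argument is heavier but yields slightly more information, namely an explicit transient envelope for $\lvert x_j(k) \lvert$ whose asymptotic value $1-\ubar{\varepsilon}(0)$ lies strictly inside the invariant set. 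Finally, the ``delicate point'' you flag --- that $\bar\omega < 1$ and $\bar\eta < 1$ must be obtained from the a priori state bound of claim (i), not from \eqref{eq:model:sigma_bounds} alone --- is precisely how the paper organizes its proof, so there is no gap on that score either.
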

\begin{proof}
    See \ref{appendix:single}.
\end{proof}

In the remainder, the following assumption is taken.
\begin{assumption} \label{ass:single:initial_state}
    The initial state of the GRU network \eqref{eq:model:gru} belongs to an arbitrarily large, but bounded, set $ {\check{\mathcal{X}}}\supseteq \mathcal{X}$, defined as
    \begin{equation}
        {\check{\mathcal{X}}} = \{ x \in \mathbb{R}^{n_x}: \, \| x \|_\infty \leq {\widecheck{\lambda}} \},
    \end{equation}
    with ${{\widecheck{\lambda}}} \geq 1$.
\end{assumption}

\section{Stability properties of single-layer GRUs} \label{sec:single}
The  goal of this section is to provide sufficient conditions for the ISS and $\delta$ISS of single-layer GRUs in the form of \eqref{eq:model:gru}.
The results will be later extended to multi-layer networks.
For compactness, in the following we denote by $x(k, \bar{x}, \bm{u}, b_r)$ the state at time $k$ of the system \eqref{eq:model:gru}, fed by the sequence $\bm{u} = \{u(0), u(1), ... \lvert u(t) \in \mathcal{U} \}$, and characterized by the initial state $x(0) = \bar{x} \in \check{\mathcal{X}}$.
Recalling from \cite{jiang2001input} the definitions of $\mathcal{K}_\infty$ and $\mathcal{KL}$ functions, the following definition of ISS is given.

\begin{definition}[ISS] \label{def:ISS}
	System \eqref{eq:model:gru} is Input-to-State Stable if there exist functions $\beta \in \mathcal{KL}$, $\gamma_u \in \mathcal{K}_\infty$, and $\gamma_b \in \mathcal{K}_\infty$, such that for any $k \in \mathbb{Z}_{\geq 0}$, any initial condition $\bar{x} \in \check{\mathcal{X}}$, any value of $b_r$, and any input sequence $\bm{u}$, it holds that
	\begin{equation} \label{eq:def:ISS}
		\| x(k, \bar{x}, \bm{u}, b_r) \|_\infty \leq \beta(\| \bar{x} \|_\infty, k) +\gamma_u(\| \bm{u} \|_\infty )  + \gamma_b(\| b_r \|_\infty ).
	\end{equation}
\end{definition}

\begin{remark} \label{rmk:hinf}
Differently from \cite{jiang2001input}, Definition \ref{def:ISS} features the in\-finity-norm of the state vector.
It is possible to show that this definition implies the one given by Jiang et al \cite{jiang2001input}. Indeed, recalling that
{$ \frac{1}{\sqrt{n}} \| v \| \leq \| v \|_\infty \leq \| v \|$}, and that $\beta$ is monotonically increasing in its first argument, \eqref{eq:def:ISS} can be recast as
	\begin{equation*}
		\| x(k, \bar{x}, \bm{u}, b_r) \| \leq \sqrt{n_x} \Big[ \beta(\| \bar{x} \|, k) + \gamma_u(\| \bm{u} \| ) + \gamma_b(\| b_r \| ) \Big].
	\end{equation*}
\end{remark}

\begin{theorem} \label{thm:single:iss}
	A sufficient condition for the ISS of the single-layer GRU network \eqref{eq:model:gru} is that
	\begin{equation} \label{eq:single:iss:condition}
		\| U_r \|_\infty \, \bar{\sigma}_f < 1,
	\end{equation}
	where
	\begin{equation} \label{eq:single:iss:bounds}
		\bar{\sigma}_f = \sigma \left( \| W_f \quad U_f \quad b_f \|_\infty \right).
	\end{equation}
\end{theorem}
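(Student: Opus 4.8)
The plan is to establish a geometric contraction of $\|x\|_\infty$ once the trajectory has entered the invariant set $\mathcal{X}$ of Lemma~\ref{lemma:single:invset_ini}, and then to glue this to the finite-time transient guaranteed by Lemma~\ref{lemma:single:invset}. First I would work inside $\mathcal{X}$ and exploit Assumption~\ref{ass:u_bounded}: for $x \in \mathcal{X}$ and $\|u\|_\infty \le 1$ the argument of each forget-gate component is bounded by the corresponding row sum of the stacked matrix $[\,W_f \ U_f \ b_f\,]$, so that $\lvert[W_f u + U_f x + b_f]_j\rvert \le \|W_f \quad U_f \quad b_f\|_\infty$; monotonicity of $\sigma$ then gives the componentwise bound $f_j \le \bar\sigma_f$, whence
\begin{equation*}
\|f\circ x\|_\infty \le \bar\sigma_f\,\|x\|_\infty .
\end{equation*}
The same reasoning applied to the update gate yields $z_j \le \bar\sigma_z := \sigma(\|W_z \quad U_z \quad b_z\|_\infty)$, a quantity that is automatically strictly below one and therefore requires no additional hypothesis.

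Using $\phi(0)=0$, $L_\phi = 1$ and the triangle/sub\-multiplicative inequalities recalled in the preliminaries, the term \eqref{eq:model:r_def} is bounded as
\begin{equation*}
\|r(u,x)\|_\infty \le \|U_r\|_\infty\,\bar\sigma_f\,\|x\|_\infty + \|W_r\|_\infty\,\|u\|_\infty + \|b_r\|_\infty .
\end{equation*}
I would then insert these estimates into the LPV representation \eqref{eq:model:proof_invset_ini:ltv}. Writing $\omega_j = z_j \in (0,\bar\sigma_z]$ and $\eta_j = r(u,x)_j$, the identity $x_j^+ = \omega_j x_j + (1-\omega_j)\eta_j$ gives $\lvert x_j^+\rvert \le [\,\omega_j + (1-\omega_j)\|U_r\|_\infty\bar\sigma_f\,]\,\|x\|_\infty + \|W_r\|_\infty\|u\|_\infty + \|b_r\|_\infty$ (bounding $1-\omega_j\le 1$ in the offset). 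Since $\|U_r\|_\infty\bar\sigma_f < 1$ the bracket is nondecreasing in $\omega_j$ and hence bounded by its value at $\omega_j=\bar\sigma_z$, so taking the maximum over $j$ yields
\begin{equation*}
\|x^+\|_\infty \le \rho\,\|x\|_\infty + \|W_r\|_\infty\,\|u\|_\infty + \|b_r\|_\infty,
\end{equation*}
with $\rho = \bar\sigma_z + (1-\bar\sigma_z)\,\|U_r\|_\infty\bar\sigma_f = 1-(1-\bar\sigma_z)(1-\|U_r\|_\infty\bar\sigma_f)$. The hypothesis \eqref{eq:single:iss:condition}, combined with $\bar\sigma_z < 1$, forces $\rho < 1$: this is the genuine contraction on which the whole argument hinges, and notably it is the boundedness of the update gate \emph{away from one} that supplies it.

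Finally I would iterate this recursion from the first instant $\bar k$ at which $x(\bar k)\in\mathcal{X}$ (so that $\|x(\bar k)\|_\infty \le 1$), summing the geometric series to obtain, for $k \ge \bar k$,
\begin{equation*}
\|x(k)\|_\infty \le \rho^{\,k-\bar k} + \frac{\|W_r\|_\infty}{1-\rho}\,\|\bm{u}\|_\infty + \frac{1}{1-\rho}\,\|b_r\|_\infty ,
\end{equation*}
which already displays the two linear $\mathcal{K}_\infty$ gains $\gamma_u$ and $\gamma_b$. For the transient $k<\bar k$ I would invoke Lemma~\ref{lemma:single:invset}, whose finite-time, componentwise-exponential convergence to $\mathcal{X}$ dominates $\|x(k)\|_\infty$ by a decaying function of $\|\bar x\|_\infty$. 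I expect the only delicate point to be the bookkeeping of this gluing: the entry time $\bar k$ depends on $\bar x$, so to exhibit a single $\beta\in\mathcal{KL}$ valid for all $k\ge 0$ I would dominate both phases by a common exponential rate $\mu = \max\{\rho,\lambda\}$, with $\lambda<1$ the transient rate of Lemma~\ref{lemma:single:invset}, and absorb the remaining constants into the first argument of $\beta$. The resulting estimate is precisely \eqref{eq:def:ISS}; everything beyond the contraction step is a routine application of the infinity-norm inequalities.
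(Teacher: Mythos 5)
Your proposal is correct and follows essentially the same route as the paper's proof: the same infinity-norm gate bounds $\bar\sigma_f$ and $\bar\sigma_z$, the same Lipschitz estimate on $r(u,x)$, the same convex-combination contraction inside $\mathcal{X}$ (your explicit $\rho = 1-(1-\bar\sigma_z)(1-\|U_r\|_\infty\bar\sigma_f)$ is precisely the paper's asserted $1-\delta$), and the same gluing of the resulting geometric recursion to the finite-time exponential transient of Lemma~\ref{lemma:single:invset} when $\bar x \notin \mathcal{X}$. The only deviation is cosmetic: you bound $1-z_j \le 1$ in the input and bias terms, yielding slightly coarser gains than the paper's $\bar\sigma_z \|W_r\|_\infty/\delta$ and $\bar\sigma_z \|b_r\|_\infty/\delta$, which does not affect validity.
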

\begin{proof}
    See \ref{appendix:single}.
\end{proof}

As mentioned above, ISS represents a fundamental property for the model of a dynamical system, as it also allows to retrieve a bound for the states around the origin, which can be seen as a (conservative) estimation of the model's output reachable set \cite{bonassi2019lstm}.
However, especially in the realm of robust control, a further property is desirable, i.e. the $\delta$ISS \cite{bayer2013discrete}. 
This property is here stated using the infinity-norm of the state vector. Nonetheless,  as discussed in Remark \ref{rmk:hinf},  this formulation implies the one provided by Bayer et al. \cite{bayer2013discrete}.

\begin{definition}[$\delta$ISS] \label{def:deltaISS}
    	System \eqref{eq:model:gru} is Incrementally Input-to-State Stable ($\delta$ISS) if there exist functions $\beta\sss{\Delta} \in \mathcal{KL}$ and $\gamma_{\Delta u} \in \mathcal{K}_\infty$ such that, for any $k \in \mathbb{Z}_{\geq 0}$, any pair of initial states $\bar{x}_{a} \in {\check{\mathcal{X}}}$ and $\bar{x}_{b} \in {\check{\mathcal{X}}}$, and any pair of input sequences $\bm{u}_a $ and $\bm{u}_b$, it holds that
	\begin{equation}\label{eq:def:deltaiss}
	\begin{aligned}
		&\| x(k, \bar{x}_{a}, \bm{u}_a, b_r) - x(k, \bar{x}_{b}, \bm{u}_b, b_r) \|_\infty  \\
		& \qquad \leq  \beta_{\Delta}( \| \bar{x}_{a} - \bar{x}_{b} \|_\infty, k) + \gamma_{{\Delta} u}(\| \bm{u}_a - \bm{u}_b \|_\infty)
	\end{aligned}
	\end{equation}
\end{definition}

This property implies that, initializing a $\delta$ISS network with different initial conditions ($\bar{x}_a$ and $\bar{x}_b$), and feeding it with different input sequences ($\bm{u}_a$ and $\bm{u}_b$), one obtains state trajectories whose maximum distance is asymptotically bounded by a function  monotonically increasing with the maximum distance between the input sequences.
It is worth noticing that, among other things, this property ensures that when GRUs are used to model nonlinear dynamical systems, their performances are not biased by a wrong initialization of the network, since $\beta(\cdot, k) \to 0$ as $k \to \infty$.

In the following, a condition ensuring that the network is $\delta$ISS is hence provided.

\begin{theorem} \label{thm:single:deltaiss}
	A sufficient condition for the $\delta$ISS of the single-layer GRU network \eqref{eq:model:gru} is that
	\begin{equation} \label{eq:single:deltaiss:condition}
		\| U_r \|_\infty \left( \frac{1}{4} {\widecheck{\lambda}} \| U_f \|_\infty +{\widecheck{\sigma}}_f \right) < 1 - \frac{1}{4} \frac{{\widecheck{\lambda}} +{\widecheck{\phi}}_r}{1 -{\widecheck{\sigma}}_z} \| U_z \|_\infty,
	\end{equation}
	where
	\begin{subequations} \label{eq:single:deltaiss:bounds}
	\begin{align}
		 {\widecheck{\sigma}}_z &= \sigma( \| W_z \quad {\widecheck{\lambda}} U_z \quad b_z \|_\infty ), \label{eq:single:deltaiss:bounds:z}\\
		 { \widecheck{\sigma}}_f &= \sigma( \| W_f \quad {\widecheck{\lambda}} U_f  \quad b_f \|_\infty ), \label{eq:single:deltaiss:bounds:f} \\
		 {\widecheck{\phi}}_r &= \phi( \| W_r \quad {\widecheck{\lambda}} U_r \quad b_r \|_\infty ). \label{eq:single:deltaiss:bounds:r}
	\end{align}
	\end{subequations}
\end{theorem}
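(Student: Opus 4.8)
The plan is to track the difference between two state trajectories and to show that, under condition \eqref{eq:single:deltaiss:condition}, it obeys a contractive recursion $\| \Delta x(k+1) \|_\infty \leq \rho \, \| \Delta x(k) \|_\infty + \mu \, \| \bm{u}_a - \bm{u}_b \|_\infty$ with $\rho < 1$, from which the $\mathcal{KL}$ and $\mathcal{K}_\infty$ bounds of Definition~\ref{def:deltaISS} follow by iteration. First I would fix $\bar{x}_a, \bar{x}_b \in {\check{\mathcal{X}}}$ and input sequences $\bm{u}_a, \bm{u}_b$, write the corresponding trajectories as $x_a, x_b$, and set $\Delta x = x_a - x_b$, $\Delta u = u_a - u_b$, $\Delta z = z_a - z_b$, $\Delta f = f_a - f_b$, and $\Delta r = r_a - r_b$ with $r$ as in \eqref{eq:model:r_def}. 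By Lemma~\ref{lemma:single:invset} together with Assumption~\ref{ass:single:initial_state}, both trajectories stay bounded, $\| x_{\{a,b\}}(k) \|_\infty \leq {\widecheck{\lambda}}$ for all $k$, which is precisely what licenses the ${\widecheck{\lambda}}$-scaled constants in \eqref{eq:single:deltaiss:bounds}. Subtracting the two update laws and adding and subtracting $z_a \circ x_b$ and $(1-z_a) \circ r_b$, I obtain the exact identity $\Delta x^+ = z_a \circ \Delta x + (1 - z_a) \circ \Delta r + \Delta z \circ (x_b - r_b)$.

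Next I would bound each ingredient componentwise, using the submultiplicativity and subadditivity of $\| \cdot \|_\infty$ recalled in the preliminaries and the Lipschitz constants $L_\sigma = \tfrac{1}{4}$, $L_\phi = 1$. The cross term is controlled by $|x_{b,j} - r_{b,j}| \leq {\widecheck{\lambda}} + {\widecheck{\phi}}_r$, since $\| x_b \|_\infty \leq {\widecheck{\lambda}}$ and monotonicity of $\phi$ gives $|r_{b,j}| \leq {\widecheck{\phi}}_r$ by \eqref{eq:single:deltaiss:bounds:r}. Lipschitzianity of $\sigma$ yields $\| \Delta z \|_\infty \leq \tfrac{1}{4}( \| U_z \|_\infty \| \Delta x \|_\infty + \| W_z \|_\infty \| \Delta u \|_\infty )$ and an analogous bound for $\Delta f$. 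For $\Delta r$, I would expand $f_a \circ x_a - f_b \circ x_b = f_a \circ \Delta x + \Delta f \circ x_b$ and invoke $f_{a,j} \leq {\widecheck{\sigma}}_f$, $\| x_b \|_\infty \leq {\widecheck{\lambda}}$, giving $\| \Delta r \|_\infty \leq \| U_r \|_\infty ( {\widecheck{\sigma}}_f + \tfrac{1}{4} {\widecheck{\lambda}} \| U_f \|_\infty ) \, \| \Delta x \|_\infty + c \, \| \Delta u \|_\infty$, whose $\| \Delta x \|_\infty$-coefficient is exactly the left-hand side of \eqref{eq:single:deltaiss:condition}.

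The delicate step, and the one I expect to be the main obstacle, is recombining these estimates through the gate-weighted convex combination $z_a \circ \Delta x + (1 - z_a) \circ \Delta r$, whose weights are state-dependent rather than constant. Componentwise, $|\Delta x_j^+| \leq z_{a,j} |\Delta x_j| + (1 - z_{a,j}) \| \Delta r \|_\infty + ( {\widecheck{\lambda}} + {\widecheck{\phi}}_r ) \| \Delta z \|_\infty$; since the argument of $z$ has infinity-norm at most $\| W_z \,\, {\widecheck{\lambda}} U_z \,\, b_z \|_\infty$, each gate entry is confined to $z_{a,j} \in [1 - {\widecheck{\sigma}}_z, {\widecheck{\sigma}}_z]$ (so that both $z_{a,j}$ and $1 - z_{a,j}$ are bounded by ${\widecheck{\sigma}}_z$). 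Maximizing the convex combination over this interval produces a homogeneous coefficient ${\widecheck{\sigma}}_z$ on $\| \Delta x \|_\infty$ and a factor $(1 - {\widecheck{\sigma}}_z)$ on $\| \Delta r \|_\infty$; care is required here because the worst-case weight differs for the state and input contributions, so the recursion must be closed with a single, slightly conservative, choice. Substituting the bounds above, the coefficient of $\| \Delta x \|_\infty$ becomes $\rho = {\widecheck{\sigma}}_z + (1 - {\widecheck{\sigma}}_z) \| U_r \|_\infty ( {\widecheck{\sigma}}_f + \tfrac{1}{4} {\widecheck{\lambda}} \| U_f \|_\infty ) + \tfrac{1}{4} ( {\widecheck{\lambda}} + {\widecheck{\phi}}_r ) \| U_z \|_\infty$. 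Imposing $\rho < 1$ and dividing through by $1 - {\widecheck{\sigma}}_z > 0$ reproduces precisely condition \eqref{eq:single:deltaiss:condition}, which explains the denominator appearing there.

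Finally, with $\rho < 1$ the scalar recursion iterates to $\| \Delta x(k) \|_\infty \leq \rho^k \| \bar{x}_a - \bar{x}_b \|_\infty + \tfrac{\mu}{1 - \rho} \| \bm{u}_a - \bm{u}_b \|_\infty$, so that setting $\beta_\Delta(s, k) = \rho^k s \in \mathcal{KL}$ and $\gamma_{\Delta u}(s) = \tfrac{\mu}{1 - \rho} s \in \mathcal{K}_\infty$ establishes \eqref{eq:def:deltaiss} and hence the $\delta$ISS of \eqref{eq:model:gru}.
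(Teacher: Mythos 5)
Your proposal follows essentially the same route as the paper's proof: the same add-and-subtract decomposition (your exact identity $\Delta x^+ = z_a \circ \Delta x + (1-z_a)\circ \Delta r + \Delta z \circ (x_b - r_b)$ is what the paper writes componentwise before taking absolute values), the same invariant-set bounds $\| x_{\{a,b\}} \|_\infty \leq \widecheck{\lambda}$, $\lvert r_{\{a,b\}j} \lvert \leq \widecheck{\phi}_r$, $\lvert f_{\{a,b\}j} \lvert \leq \widecheck{\sigma}_f$, the same Lipschitz estimates for $\Delta z$, $\Delta f$, $\Delta r$, and the same closing observation that the coefficient of $\| \Delta x \|_\infty$ is affine in $z_{a,j} \in [1-\widecheck{\sigma}_z, \widecheck{\sigma}_z]$, whose worst case $z_{a,j} = \widecheck{\sigma}_z$, after division by $1 - \widecheck{\sigma}_z > 0$, reproduces exactly condition \eqref{eq:single:deltaiss:condition}.

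However, the step you flag as delicate is stated in a way that, read literally, is wrong, and the fix is not the one you suggest. You propose to bound the convex combination by placing $\widecheck{\sigma}_z$ on $\| \Delta x \|_\infty$ and $(1-\widecheck{\sigma}_z)$ on $\| \Delta r \|_\infty$, i.e.\ by a ``single conservative choice'' $z_{a,j} = \widecheck{\sigma}_z$. This is not a uniform bound: whenever $\| \Delta r \|_\infty > \lvert \Delta x_j \lvert$, the affine function $z \lvert \Delta x_j \lvert + (1-z)\| \Delta r \|_\infty$ is maximized at the \emph{other} endpoint $z = 1 - \widecheck{\sigma}_z$, and no single value of $z$ dominates both terms simultaneously. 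The legitimate procedure (and the paper's) is to first substitute $\| \Delta r \|_\infty \leq C \| \Delta x \|_\infty + c \| \Delta u \|_\infty$, with $C = \| U_r \|_\infty \left( \widecheck{\sigma}_f + \frac{1}{4} \widecheck{\lambda} \| U_f \|_\infty \right)$, and then maximize the coefficient of $\| \Delta x \|_\infty$, namely $z + (1-z)C$, and the coefficient of $\| \Delta u \|_\infty$, namely $(1-z)c$, \emph{separately} over $z \in [1-\widecheck{\sigma}_z, \widecheck{\sigma}_z]$; this is valid because each coefficient multiplies a non-negative quantity. Since \eqref{eq:single:deltaiss:condition} forces $C < 1$, the first coefficient is increasing in $z$ and peaks at $z = \widecheck{\sigma}_z$, giving precisely your $\rho$; the second peaks at $z = 1 - \widecheck{\sigma}_z$, so the input gain carries a factor $\widecheck{\sigma}_z$, not $(1-\widecheck{\sigma}_z)$ -- this is why the paper defines $\widecheck{\alpha}_{\Delta u}$ as the supremum of $\alpha_{\Delta u}$ evaluated at the \emph{minimum} gate value. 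The slip does not affect the sufficient condition itself, since your $\rho$ and its equivalence with \eqref{eq:single:deltaiss:condition} are correct; it only understates the constant in $\gamma_{\Delta u}$, which is easily repaired as above.
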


\begin{proof}
	See \ref{appendix:single}.
\end{proof}

It should be noted that the conditions stated in Theorem \ref{thm:single:deltaiss} might be very conservative.
To relax the conservativeness of the approach, one can assume that the GRU network is always initialized inside the invariant set, i.e. $ {\check{\mathcal{X}}}= \mathcal{X}$, which allows to ease bounds \eqref{eq:single:deltaiss:bounds} and to relax condition \eqref{eq:single:deltaiss:condition}, as shown in the following Corollary.

\begin{corollary} \label{cor:single:deltaiss_relaxed}
A sufficient condition for the $\delta$ISS of the single-layer GRU network \eqref{eq:model:gru}, initialized within $ {\check{\mathcal{X}}}= \mathcal{X}$, is that
	\begin{equation} \label{eq:single:deltaiss:condition_relaxed}
		\| U_r \|_\infty \left( \frac{1}{4}  \| U_f \|_\infty + \bar{\sigma}_f \right) < 1 - \frac{1}{4} \frac{1 + \bar{\phi}_r}{1 - \bar{\sigma}_z} \| U_z \|_\infty,
	\end{equation}
	where
	\begin{subequations} \label{eq:single:deltaiss:bounds_relaxed}
	\begin{align}
		\bar{\sigma}_z &= \sigma( \| W_z \quad U_z \quad b_z \|_\infty ),\\
		\bar{\sigma}_f &= \sigma( \| W_f \quad U_f  \quad b_f \|_\infty ), \\
		\bar{\phi}_r &= \phi( \| W_r \quad U_r \quad b_r \|_\infty ).
	\end{align}
	\end{subequations}
\end{corollary}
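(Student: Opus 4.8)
The plan is to obtain the statement as a direct specialization of Theorem~\ref{thm:single:deltaiss} to the case ${\widecheck{\lambda}} = 1$. The crucial observation is that requiring the network to be initialized within ${\check{\mathcal{X}}} = \mathcal{X} = [-1,1]^{n_x}$ amounts, in the notation of Assumption~\ref{ass:single:initial_state}, to selecting the smallest admissible value ${\widecheck{\lambda}} = 1$. Moreover, by Lemma~\ref{lemma:single:invset_ini} the set $\mathcal{X}$ is invariant, so any trajectory originating in $\mathcal{X}$ satisfies $\| x(k) \|_\infty \leq 1$ for all $k \geq 0$; consequently the uniform state bound ${\widecheck{\lambda}}$ that enters the proof of Theorem~\ref{thm:single:deltaiss} can be taken equal to $1$ without loss.

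First I would revisit the auxiliary bounds \eqref{eq:single:deltaiss:bounds}: in each of them the factor ${\widecheck{\lambda}}$ multiplies, respectively, $U_z$, $U_f$ and $U_r$ inside the activation argument. Setting ${\widecheck{\lambda}} = 1$ these three expressions collapse precisely to the relaxed bounds \eqref{eq:single:deltaiss:bounds_relaxed}, i.e. ${\widecheck{\sigma}}_z \to \bar{\sigma}_z$, ${\widecheck{\sigma}}_f \to \bar{\sigma}_f$ and ${\widecheck{\phi}}_r \to \bar{\phi}_r$. Substituting ${\widecheck{\lambda}} = 1$ and these reduced quantities into the sufficient condition \eqref{eq:single:deltaiss:condition}, the left-hand side becomes $\| U_r \|_\infty ( \tfrac{1}{4} \| U_f \|_\infty + \bar{\sigma}_f )$ and the right-hand side becomes $1 - \tfrac{1}{4} \tfrac{1 + \bar{\phi}_r}{1 - \bar{\sigma}_z} \| U_z \|_\infty$, which is exactly the relaxed condition \eqref{eq:single:deltaiss:condition_relaxed}. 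A quick monotonicity check confirms that this is genuinely a relaxation: since both sides of \eqref{eq:single:deltaiss:condition} are monotone in ${\widecheck{\lambda}}$ (the left-hand side increasing, the subtracted term on the right increasing), the choice ${\widecheck{\lambda}} = 1$ yields the least restrictive inequality among all admissible ${\widecheck{\lambda}} \geq 1$.

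Since every estimate in the proof of Theorem~\ref{thm:single:deltaiss} that involves ${\widecheck{\lambda}}$ does so only as an upper bound on $\| x(k) \|_\infty$ along the two compared trajectories, and since both trajectories remain in $\mathcal{X}$ by invariance, the whole derivation carries over verbatim with ${\widecheck{\lambda}} = 1$, so \eqref{eq:single:deltaiss:condition_relaxed} is sufficient for the $\delta$ISS of a GRU initialized in $\mathcal{X}$. The one point I would check most carefully — the only genuine obstacle in an otherwise immediate argument — is precisely this: one must confirm that ${\widecheck{\lambda}}$ appears in the Theorem~\ref{thm:single:deltaiss} proof solely as a uniform state bound valid for both $x(k, \bar{x}_a, \bm{u}_a, b_r)$ and $x(k, \bar{x}_b, \bm{u}_b, b_r)$, so that the invariance of $\mathcal{X}$ legitimately replaces it by $1$. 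Any hypothetical occurrence of ${\widecheck{\lambda}}$ playing a different role, for instance in bounding the initial-state difference $\| \bar{x}_a - \bar{x}_b \|_\infty$, would require separate treatment; verifying its absence is what closes the proof.
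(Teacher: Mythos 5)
Your proposal is correct and follows essentially the same route as the paper: the paper's proof is precisely the observation that ${\check{\mathcal{X}}}=\mathcal{X}$ forces ${\widecheck{\lambda}}=1$, so the result follows by invoking Theorem~\ref{thm:single:deltaiss} with ${\widecheck{\lambda}}=1$, under which the bounds \eqref{eq:single:deltaiss:bounds} reduce to \eqref{eq:single:deltaiss:bounds_relaxed} and condition \eqref{eq:single:deltaiss:condition} reduces to \eqref{eq:single:deltaiss:condition_relaxed}. Your additional verification that ${\widecheck{\lambda}}$ enters the proof of Theorem~\ref{thm:single:deltaiss} only as a uniform bound on $\| x_{\{a,b\}}(k) \|_\infty$ (guaranteed by the invariance of $\mathcal{X}$ from Lemma~\ref{lemma:single:invset_ini}), and never as a bound on the initial-state difference, is a sound and worthwhile check that the paper leaves implicit.
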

\begin{proof}
    Since $ {\check{\mathcal{X}}}= \mathcal{X}$, it follows that ${\widecheck{\lambda}} = 1$.
    Corollary \ref{cor:single:deltaiss_relaxed} can be hence proven invoking Theorem \ref{thm:single:deltaiss} with ${\widecheck{\lambda}} = 1$.
\end{proof}

\begin{remark} \label{rmk:weaker_deltaiss_single}
    Note that the condition \eqref{eq:single:deltaiss:condition_relaxed} involved by Corollary \ref{cor:single:deltaiss_relaxed} is less conservative than the condition \eqref{eq:single:deltaiss:condition} required by Theorem \ref{thm:single:deltaiss}.
    While Corollary \ref{cor:single:deltaiss_relaxed} ensures the $\delta$ISS just inside the invariant set $\mathcal{X}$, it allows to guarantee a similar but weaker stability-related property also when $\bar{x}_a \notin \mathcal{X}$ and/or $\bar{x}_b \notin \mathcal{X}$.
    In this regard, it is not possible to show that, during the exponential convergence of the states $x_a$ and $x_b$ into $\mathcal{X}$ (Lemma \ref{lemma:single:invset}), the $\delta$ISS relation \eqref{eq:def:deltaiss} is implied by condition \eqref{eq:single:deltaiss:condition_relaxed}.
    However, as soon as $x_{\{a,b\}} \in \mathcal{X}$ -- which is guaranteed to happen in finite time (Lemma \ref{lemma:single:invset}) -- the $\delta$ISS property regularly applies.
\end{remark}

\begin{remark} \label{rmk:constraints}
	Theorem \ref{thm:single:iss}, Theorem \ref{thm:single:deltaiss}, and Corollary \ref{cor:single:deltaiss_relaxed} involve constraints on the infinity-norms of the weight matrices. 
	These conditions can be used to a-posteriori check if the trained network is ISS and $\delta$ISS, or they can be used to enforce these stability properties during training.  
	In the latter case, {since the main available training environment are unconstrained, conditions  \eqref{eq:single:iss:condition}, \eqref{eq:single:deltaiss:condition}, and \eqref{eq:single:deltaiss:condition_relaxed} may be implemented as soft constraints in the training procedure by penalizing their violation in the loss function, as discussed in Section \ref{sec:example}. 
	In this way, nonlinear constraints are recast as nonlinear additive terms on the cost function, which can be easily managed by gradient-based training algorithms.}
\end{remark}

Finally, it is worth noticing that the $\delta$ISS property implies ISS \cite{bayer2013discrete}. 
Our sufficient conditions are consistent with this relation, as shown by the following Proposition.

\begin{proposition} \label{prop:iss_deltaiss}
	If the GRU network \eqref{eq:model:gru} satisfies the $\delta$ISS condition \eqref{eq:single:deltaiss:condition}, it also satisfies the ISS condition \eqref{eq:single:iss:condition}.
\end{proposition}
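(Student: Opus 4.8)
The plan is to prove Proposition \ref{prop:iss_deltaiss} purely algebraically: no dynamical argument is needed, since both statements are inequalities on the weight matrices, and I only have to show that \eqref{eq:single:deltaiss:condition} forces \eqref{eq:single:iss:condition} through a short chain of elementary bounds. First I would argue that the term subtracted on the right-hand side of \eqref{eq:single:deltaiss:condition} is nonnegative. Indeed, by Assumption \ref{ass:single:initial_state} one has $\widecheck{\lambda} \geq 1 > 0$; by \eqref{eq:model:sigma_bounds} the quantity $\widecheck{\sigma}_z \in (0,1)$, so $1 - \widecheck{\sigma}_z > 0$; the quantity $\widecheck{\phi}_r = \phi(\|\cdot\|_\infty)$ is nonnegative because $\phi = \text{tanh}$ maps nonnegative arguments to nonnegative values; and $\| U_z \|_\infty \geq 0$. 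Hence the full subtracted term is nonnegative, so the right-hand side of \eqref{eq:single:deltaiss:condition} is at most $1$. This immediately yields
\[
\| U_r \|_\infty \left( \tfrac{1}{4} \widecheck{\lambda} \| U_f \|_\infty + \widecheck{\sigma}_f \right) < 1 .
\]

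Next I would discard the nonnegative summand $\tfrac{1}{4}\widecheck{\lambda}\|U_f\|_\infty$ inside the bracket. Since $\|U_r\|_\infty \geq 0$, dropping it can only decrease the left-hand side, so the previous inequality implies $\|U_r\|_\infty \, \widecheck{\sigma}_f < 1$. It remains only to replace $\widecheck{\sigma}_f$ by $\bar{\sigma}_f$ and check the direction of the resulting inequality.

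The final step compares the arguments appearing in \eqref{eq:single:deltaiss:bounds:f} and \eqref{eq:single:iss:bounds}: the two concatenated matrices $[\,W_f \quad \widecheck{\lambda} U_f \quad b_f\,]$ and $[\,W_f \quad U_f \quad b_f\,]$ differ only in the middle block, which is scaled by $\widecheck{\lambda} \geq 1$. Because scaling a block by a factor $\widecheck{\lambda} \geq 1$ cannot decrease the magnitude of any entry, it cannot decrease any row-sum of absolute values, so $\| W_f \quad \widecheck{\lambda} U_f \quad b_f \|_\infty \geq \| W_f \quad U_f \quad b_f \|_\infty$. The monotonicity of $\sigma$ then gives $\widecheck{\sigma}_f \geq \bar{\sigma}_f$, whence $\|U_r\|_\infty \, \bar{\sigma}_f \leq \|U_r\|_\infty \, \widecheck{\sigma}_f < 1$, which is exactly the ISS condition \eqref{eq:single:iss:condition}.

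I expect no substantive obstacle: the claim is just the internal consistency of the two weight inequalities, and the argument is a monotone chain. The only points demanding a moment's care are the sign of the subtracted term, which makes the right-hand side of \eqref{eq:single:deltaiss:condition} bounded above by $1$, and the monotone comparison $\widecheck{\sigma}_f \geq \bar{\sigma}_f$; both rest solely on $\widecheck{\lambda} \geq 1$ together with the boundedness and monotonicity of $\sigma$ and $\phi$ recalled in \eqref{eq:model:sigma_bounds}.
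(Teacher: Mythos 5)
Your proposal is correct and follows essentially the same route as the paper's own proof: bound the right-hand side of \eqref{eq:single:deltaiss:condition} by $1$ using nonnegativity of the subtracted term, drop the nonnegative summand $\tfrac{1}{4}\widecheck{\lambda}\|U_f\|_\infty$ to obtain $\|U_r\|_\infty\,\widecheck{\sigma}_f < 1$, and conclude via the monotonicity-based comparison $\bar{\sigma}_f \leq \widecheck{\sigma}_f$. Your write-up is in fact slightly more explicit than the paper's (e.g., justifying $\widecheck{\phi}_r \geq 0$ and the entry-wise norm comparison behind $\bar{\sigma}_f \leq \widecheck{\sigma}_f$), but the argument is the same.
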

\begin{proof}
	Since $ {\widecheck{\sigma}}_r \in (0, 1)$, $ {\widecheck{\sigma}}_z \in (0, 1)$, and $\| U_z \|_\infty \geq 0$, {being ${\widecheck{\lambda}}$ strictly positive,} if \eqref{eq:single:deltaiss:condition} is fulfilled it also holds that
	\begin{equation} \label{eq:single:deltaiss_imples_iss:int1}
		\| U_r \|_\infty \bigg( \frac{1}{4} {\widecheck{\lambda}} \| U_f \|_\infty +{\widecheck{\sigma}}_f \bigg) < 1 - \frac{1}{4} \frac{{\widecheck{\lambda}} +{\widecheck{\phi}}_r}{1 -{\widecheck{\sigma}}_z} \| U_z \|_\infty < 1.
	\end{equation}
	Noting that $\frac{1}{4} {\widecheck{\lambda}} \| U_f \|_\infty \geq 0$,  \eqref{eq:single:deltaiss_imples_iss:int1} entails that
	\begin{equation} \label{eq:single:deltaiss_implies:iss:int2}
		\| U_r \|_\infty \,{\widecheck{\sigma}}_f < 1.
	\end{equation}
	Thanks to the monotonicity of $\sigma$, since ${\widecheck{\lambda}} \geq 1$, it holds that $\bar{\sigma}_f \leq{\widecheck{\sigma}}_f$, and hence \eqref{eq:single:deltaiss_implies:iss:int2} implies the ISS condition \eqref{eq:single:iss:condition}.
\end{proof}

\section{Stability properties of deep GRUs}\label{sec:deep}
Despite in many cases single-layer GRUs may show satisfactory performances, in the literature deep (i.e. multi-layer) GRUs are typically adopted to enhance the representational capabilities of these networks \cite{bianchi2017overview, goodfellow2016deep}. Let the superscript $^i$ indicate the $i$-th layer of the network. A deep GRU with $M$ layers is then described by the following equations
\begin{subequations} \label{eq:model:deepgru}
\begin{equation} \label{eq:model:deepgru:gru}
\left\{ \begin{array}{l}
    x^{{i},+} = z^{i}\circ x^{i} + (1 - z^{i}) \circ \phi\big( W_r^{i} \, u^{i} + U_r^{i} \, f^{i} \circ z^{i} + b_r^{i} \big) \vspace{1mm}\\
    z^{i} = \sigma \big( W_z^{i} \, u^{i} + U_z^{i} \, x^{i} + b_z^{i} \big) \vspace{1mm} \\
    f^{i} = \sigma \big( W_f^{i} \, u^{i} + U_f^{i} \, x^{i} + b_f^{i} \big)
\end{array}\right.\!\!,
\end{equation}
for all $i \in \{ 1, ..., M\}$. The input of each layer is the future state of the previous one, save for the first layer which is fed by $u$, i.e.
\begin{equation}\label{eq:model:deepgru:input}
\left\{
\begin{array}{l}
	u^{1} = u,\vspace{1mm} \\
	u^{i} = x^{i-1, +} \qquad \forall i \in \{2, ..., M\}, \\
\end{array} \right.
\end{equation}
while the output of the network is a linear combination of the states of the last layer
\begin{equation} \label{eq:model:deepgru:output}
	y = U_o \, x^{M} + b_o.
\end{equation}
\end{subequations}
The deep GRU described by \eqref{eq:model:deepgru} is depicted in Figure \ref{fig:deep_gru}.
Note that for this network the state is given by the concatenation of all layers' states, i.e. $x = [x^{1 \prime}, ..., x^{M \prime}]^{\prime}$. Similarly we define $b_r = [b_r^{1 \prime}, ..., b_r^{M \prime}]^{\prime}$. 
Lemma \ref{lemma:single:invset_ini} and \ref{lemma:single:invset} are now extended to deep GRUs.

\begin{figure}[t]
    \centering
    \includegraphics[width=1\columnwidth]{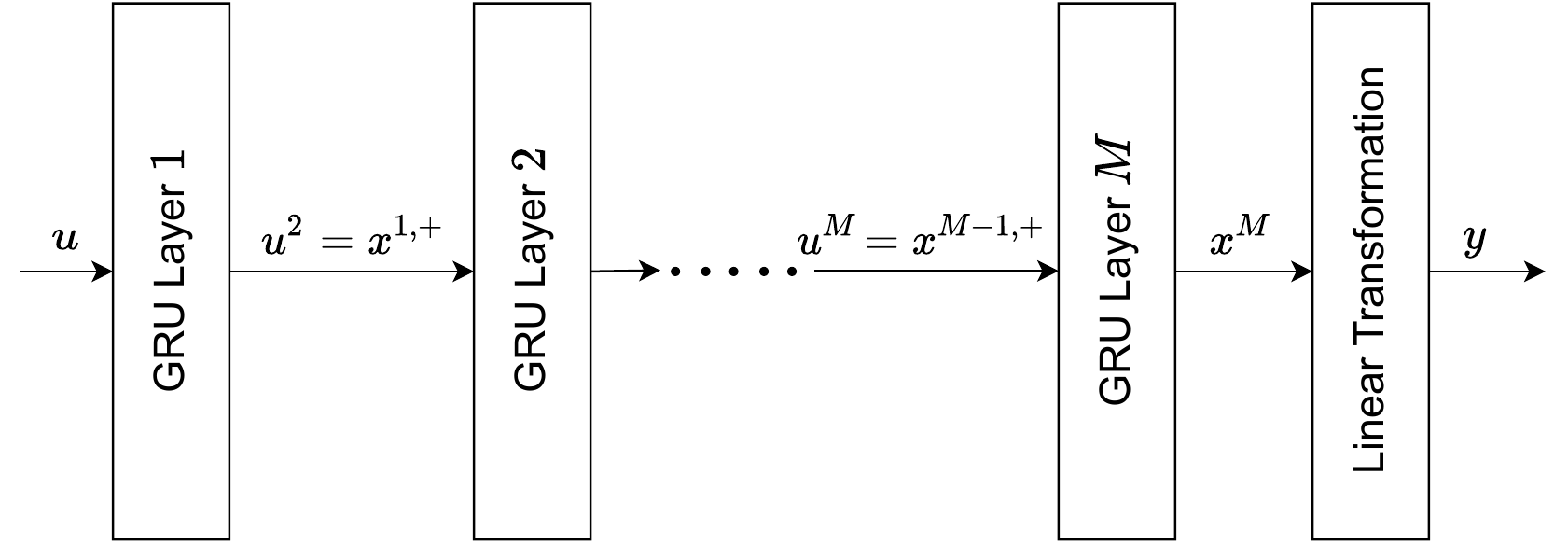}
    \caption{Scheme of the deep GRU \eqref{eq:model:deepgru}.}
    \label{fig:deep_gru}
\end{figure}

\begin{lemma}
    The set $\mathcal{X} = \bigtimes_{i=1}^M \mathcal{X}^i$, with $\mathcal{X}^i = [-1, 1]^{n_x^i}$, is an invariant set of the state $x$ of the deep GRU \eqref{eq:model:deepgru}, meaning that for any input $u$
    \begin{equation*}
        x(k) \in \mathcal{X} \Rightarrow x(k+1) \in \mathcal{X}.
    \end{equation*}
\end{lemma}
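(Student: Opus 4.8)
The plan is to exploit the fact that the deep GRU \eqref{eq:model:deepgru} is a cascade of single-layer GRUs, each of which already falls under the scope of Lemma \ref{lemma:single:invset_ini}. The first observation I would make explicit is that the update equation of each layer $i$ in \eqref{eq:model:deepgru:gru} has exactly the structure of the single-layer model \eqref{eq:model:gru}: the evolution of $x^i$ is driven by its own state $x^i$ and by the layer input $u^i$, through the same combination of gates and of the $\phi(\cdot)$ nonlinearity. Crucially, the proof of Lemma \ref{lemma:single:invset_ini} relies only on the boundedness \eqref{eq:model:sigma_bounds} of $\sigma$ and $\phi$, and not on Assumption \ref{ass:u_bounded}; hence its conclusion holds for an arbitrary layer input, bounded or not.

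First I would apply the LPV recast of Lemma \ref{lemma:single:invset_ini} to each layer componentwise, writing
\begin{equation*}
    x_j^{i,+} = \omega_j^i \, x_j^i + \big(1 - \omega_j^i\big)\, \eta_j^i,
\end{equation*}
with $\omega_j^i = z^i_j \in (0,1)$ and $\eta_j^i \in (-1,1)$ by \eqref{eq:model:sigma_bounds}, irrespective of the value of $u^i$. Since $x_j^{i,+}$ is then a convex combination of two numbers lying in $[-1,1]$, it follows that $x^i(k) \in \mathcal{X}^i \Rightarrow x^i(k+1) \in \mathcal{X}^i$ for every $i \in \{1,\dots,M\}$. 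Collecting these per-layer implications and recalling that $x = [x^{1\prime},\dots,x^{M\prime}]^{\prime}$ and $\mathcal{X} = \bigtimes_{i=1}^M \mathcal{X}^i$, I would conclude that $x(k) \in \mathcal{X}$ forces $x^i(k) \in \mathcal{X}^i$ for all $i$, hence $x^i(k+1) \in \mathcal{X}^i$ for all $i$, i.e. $x(k+1) \in \mathcal{X}$.

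The one point requiring care — and what would otherwise be the obstacle — is that for the interior layers ($i \geq 2$) the input is $u^i = x^{i-1,+}$, which is not a priori covered by Assumption \ref{ass:u_bounded}. This is exactly why it is essential that the invariance established in Lemma \ref{lemma:single:invset_ini} holds for \emph{any} input: the per-layer argument never needs $\| u^i \|_\infty \leq 1$. Equivalently, one could dispose of this concern by a short induction over the layer index, noting that once $x^{i-1}(k) \in \mathcal{X}^{i-1}$ the layer output $x^{i-1,+}$ lies in $[-1,1]^{n_x^{i-1}}$, so the cascaded input $u^i$ does in fact satisfy $\| u^i \|_\infty \leq 1$ and Lemma \ref{lemma:single:invset_ini} applies verbatim to layer $i$. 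Either route closes the proof; the input-independence remark is the cleaner of the two.
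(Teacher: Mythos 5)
Your proof is correct and follows essentially the same route as the paper: apply Lemma \ref{lemma:single:invset_ini} (which holds for any layer input, bounded or not) to each layer separately, then conclude invariance of the Cartesian product $\mathcal{X} = \bigtimes_{i=1}^M \mathcal{X}^i$. Your explicit remark that the single-layer argument never needs Assumption \ref{ass:u_bounded} — or, alternatively, the induction showing $u^i = x^{i-1,+} \in [-1,1]^{n_x^{i-1}}$ — is exactly the point the paper's proof relies on implicitly via the ``for any $u^i$'' clause.
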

\begin{proof}
{Consider the generic layer $i \in \{ 1, ..., M \}$. 
    Since $x^i(k) \in \mathcal{X}^i$, Lemma \ref{lemma:single:invset_ini} implies that, for any $u^i$, $x^i(k+1) \in \mathcal{X}^i$. 
    Hence, $\mathcal{X}^i=[-1, 1]^{n_x^i}$ is an invariant set of the $i$-th layer's state, $x^i$.}
    The Cartesian product of these sets, i.e. $\mathcal{X}$, is therefore the invariant set of the state vector $x$.
\end{proof}

\begin{lemma} \label{lemma:deep:invset}
	For any arbitrary initial state $\bar{x}= [ \bar{x}^{1 \prime}, ..., \bar{x}^{M \prime}]^\prime$, with $\bar{x}^i \in \mathbb{R}^{n_x^i}$,
	\begin{enumerate}[i.]
	    \itemsep0em
	    \item if $\bar{x}^i \notin \mathcal{X}^i$, $\| x^i(k) \|_\infty$ is strictly decreasing until $x^i(k) \in \mathcal{X}^i$, and $\| x(k) \|_\infty$ is strictly decreasing until $x(k) \in \mathcal{X}$;
	    \item the convergence happens in finite time, i.e. there exists a finite $\bar{k} \geq 0$ such that for any layer $x^i(k) \in \mathcal{X}^i, \, \forall k \geq \bar{k}$;
	    \item each state component $x_j^i(k)$ converges into its invariant set $[-1, 1]$ in an exponential fashion.
	\end{enumerate}
\end{lemma}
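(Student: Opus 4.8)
The plan is to reduce everything to the single-layer result of Lemma \ref{lemma:single:invset}, applied layer by layer. The essential observation is that each layer $i$ of the deep GRU \eqref{eq:model:deepgru} obeys exactly the single-layer GRU recursion \eqref{eq:model:gru}, the only difference being that its input $u^i$ is the (future) state of the preceding layer rather than the external input $u$. Crucially, Lemma \ref{lemma:single:invset} -- like the invariance argument of Lemma \ref{lemma:single:invset_ini} -- holds \emph{for any} input sequence: the LPV recast \eqref{eq:model:proof_invset_ini:ltv}, written for layer $i$, reads $x_j^i(k+1) = \omega_j^i(k)\, x_j^i(k) + (1 - \omega_j^i(k))\, \eta_j^i(k)$ with $\omega_j^i(k) \in (0,1)$ and $\eta_j^i(k) \in (-1,1)$, and these bounds are independent of $u^i$. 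Hence the convergence of $x^i$ into $\mathcal{X}^i$ does not require $u^i$ to satisfy Assumption \ref{ass:u_bounded}, and Lemma \ref{lemma:single:invset} can be invoked verbatim on each layer.

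First I would apply Lemma \ref{lemma:single:invset} to every layer $i \in \{1, \dots, M\}$ separately. This immediately yields, for each $i$: (a) if $\bar{x}^i \notin \mathcal{X}^i$, then $\| x^i(k) \|_\infty$ is strictly decreasing until $x^i(k) \in \mathcal{X}^i$; (b) there exists a finite $\bar{k}^i$ with $x^i(k) \in \mathcal{X}^i$ for all $k \geq \bar{k}^i$; and (c) each component $x_j^i(k)$ converges exponentially into $[-1,1]$. Claim (iii) of the lemma and the per-layer part of claim (i) then follow directly, while claim (ii) is obtained by setting $\bar{k} = \max_{i} \bar{k}^i$, which is finite because $M$ is finite.

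The remaining step is to establish the strict decrease of the aggregate norm $\| x(k) \|_\infty$ while $x(k) \notin \mathcal{X}$. Here I would use that, for the stacked state, $\| x(k) \|_\infty = \max_{i} \| x^i(k) \|_\infty$, and that $x(k) \notin \mathcal{X}$ is equivalent to $m := \| x(k) \|_\infty > 1$. I would then argue layerwise: any layer with $\| x^i(k) \|_\infty > 1$ is outside $\mathcal{X}^i$ and strictly decreases by (a), so $\| x^i(k+1) \|_\infty < \| x^i(k) \|_\infty \leq m$; any layer with $\| x^i(k) \|_\infty \leq 1$ lies in $\mathcal{X}^i$ and, by the invariance of $\mathcal{X}^i$ (Lemma \ref{lemma:single:invset_ini} applied to layer $i$), remains there, so $\| x^i(k+1) \|_\infty \leq 1 < m$. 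In both cases $\| x^i(k+1) \|_\infty < m$, and taking the maximum over the finitely many layers gives $\| x(k+1) \|_\infty < m = \| x(k) \|_\infty$, the desired strict decrease.

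The main obstacle is conceptual rather than computational: it is the justification that Lemma \ref{lemma:single:invset} applies to the inner layers even though their inputs $u^i = x^{i-1,+}$ need not be unity-bounded during the transient. Once the input-independence of the single-layer convergence is recognized, the layer-by-layer reduction is routine, and the only genuine bookkeeping is the $\max$-of-norms argument used to promote the per-layer strict decrease to the full state vector.
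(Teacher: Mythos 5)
Your overall strategy — reduce to Lemma \ref{lemma:single:invset} layer by layer, then take $\bar{k} = \max_i \bar{k}^i$ — is the same as the paper's, and your max-of-norms argument promoting the per-layer strict decrease to the stacked state is correct. The gap is the central justification that Lemma \ref{lemma:single:invset} ``can be invoked verbatim on each layer'' because the LPV bounds $\omega_j^i(k) \in (0,1)$, $\eta_j^i(k) \in (-1,1)$ are independent of $u^i$. Those open-interval bounds are indeed input-independent, but they only deliver the qualitative strict decrease of claim (i) (and the invariance of Lemma \ref{lemma:single:invset_ini}, which genuinely holds for any input). Claims (ii) and (iii) of Lemma \ref{lemma:single:invset} rest on the \emph{uniform} bounds \eqref{eq:single:invset:bounds_k}, i.e. $0 < \ubar{\omega}(0) \leq \omega_j(k)$ and $\lvert \eta_j(k) \rvert \leq 1 - \ubar{\varepsilon}(0) < 1$ for all $k$: the per-step decrease is at least $\ubar{\omega}(0)\,\ubar{\varepsilon}(0)$ and the exponential rates are $\bar{\omega}(0)$, $\ubar{\omega}(0)$. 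Those uniform bounds are constructed in \eqref{eq:single:invset:bounds} precisely by invoking Assumption \ref{ass:u_bounded}, so they are \emph{not} input-independent. Indeed, if a layer's input were truly arbitrary the lemma fails: take a scalar layer with $z = \sigma(u)$ and an input growing so that $\omega(k) = 1 - 2^{-(k+2)}$; then $\prod_k \omega(k) > \tfrac12$, the forced response is bounded by $\sum_k (1-\omega(k)) \leq \tfrac12$, and a trajectory started at $\bar{x} = 10$ stays above $4$ forever — strictly decreasing, but never entering $[-1,1]$, neither in finite time nor exponentially.

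What the argument actually needs is a uniform-in-time bound on each layer's input, and this is exactly what the layer-by-layer structure provides and what the paper's proof supplies: by \eqref{eq:single:invset:x_lessthan_x0}, layer $i-1$ satisfies $\| x^{i-1}(k) \|_\infty \leq \max(\| \bar{x}^{i-1} \|_\infty, 1)$ for all $k$, so by \eqref{eq:model:deepgru:input} the input $u^i = x^{i-1,+}$ of layer $i$ is uniformly bounded, though no longer unity-bounded. One must then rerun (not cite verbatim) the single-layer proof with the bounds \eqref{eq:single:invset:bounds} inflated by this input bound — replacing $W_\star^i$ by $\widecheck{\lambda}_k^{i-1} W_\star^i$ as in \eqref{eq:deep:invset:bounds} — which again yields uniform constants $\ubar{\omega}^i(0) > 0$, $\ubar{\varepsilon}^i(0) > 0$ and hence finite-time and exponential convergence for layer $i$, propagating down the cascade. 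With that quantitative step inserted in place of the ``holds for any input sequence'' claim, your proof matches the paper's.
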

\begin{proof}
    See \ref{appendix:deep}.
\end{proof}

As for single-layer GRUs, in the remainder the following assumption is taken.
\begin{assumption} \label{ass:deep:initial_state}
    The initial state of the GRU network \eqref{eq:model:deepgru:gru} belongs to an arbitrarily large, but bounded, set $ {\check{\mathcal{X}}}\supseteq \mathcal{X}$, defined as $ {\check{\mathcal{X}}}= \bigtimes_{i=1}^M {\check{\mathcal{X}}}^i$, where
    \begin{equation}
        {\check{\mathcal{X}}}^i = \{ x \in \mathbb{R}^{n_x^i}: \, \| x \|_\infty \leq {\widecheck{\lambda}}^i \},
    \end{equation}
    with ${\widecheck{\lambda}}^i \geq 1$.
\end{assumption}

The following propositions can hence be stated to guarantee the ISS and $\delta$ISS of deep GRUs.

\begin{proposition} \label{thm:deep:iss}
	A sufficient condition for the ISS of the deep GRU \eqref{eq:model:deepgru} is that each layer is ISS, i.e. it satisfies condition \eqref{eq:single:iss:condition}
	\begin{equation} \label{eq:deep:iss:condition}
		\| U_r^{i} \|_\infty  \, \bar{\sigma}_f^i < 1
	\end{equation}
	$\forall i \in \{1, ..., M\}$, where
	\begin{equation}
	    \bar{\sigma}_f^i = \sigma( \| W_f^i \quad U_f^i  \quad b_f^i \|_\infty ) .
	\end{equation}
\end{proposition}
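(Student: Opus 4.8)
The plan is to exploit the purely feed-forward (cascade) interconnection of the $M$ layers: by \eqref{eq:model:deepgru:input} the input of layer $i$ is the time-shifted state $x^{i-1,+}$ of layer $i-1$, and no deeper layer feeds back onto a shallower one. Since each individual layer has exactly the structure \eqref{eq:model:gru}, and the per-layer requirement \eqref{eq:deep:iss:condition} is nothing but the single-layer condition \eqref{eq:single:iss:condition} applied to layer $i$, Theorem \ref{thm:single:iss} certifies that every layer, regarded in isolation with input $u^i$, is ISS. The statement then reduces to the classical fact that a cascade of ISS subsystems is itself ISS, which I would establish by induction on the number of layers, the inductive step being a two-block cascade argument.

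First I would check admissibility of the interconnection signals. Theorem \ref{thm:single:iss} presumes a unity-bounded input (Assumption \ref{ass:u_bounded}), whereas for $i\geq 2$ the input $u^i = x^{i-1,+}$ need not lie in $[-1,1]^{n_x^{i-1}}$ during the initial transient. However, Lemma \ref{lemma:deep:invset} guarantees that each $x^{i-1}(k)$ enters and remains in its invariant set $\mathcal{X}^{i-1}=[-1,1]^{n_x^{i-1}}$ after a finite time $\bar{k}$, staying norm-bounded by $\widecheck{\lambda}^{i-1}$ before then. Hence for $k \geq \bar{k}$ every inner input is unity-bounded and Theorem \ref{thm:single:iss} applies verbatim, while the finite initial transient contributes only bounded states that can be absorbed into the $\mathcal{KL}$ term.

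For the cascade bound itself, I would start from the single-layer estimate $\|x^1(k)\|_\infty \leq \beta^1(\|\bar{x}^1\|_\infty, k) + \gamma_u^1(\|\bm{u}\|_\infty) + \gamma_b^1(\|b_r^1\|_\infty)$, feed $\|\bm{u}^2\|_\infty = \sup_k \|x^{1,+}(k)\|_\infty$ into layer $2$, and iterate up to layer $M$; the overall state norm is recovered from $\|x(k)\|_\infty = \max_i \|x^i(k)\|_\infty$ and from $\|b_r\|_\infty = \max_i \|b_r^i\|_\infty$. Compositions of class-$\mathcal{K}_\infty$ gains remain class-$\mathcal{K}_\infty$, and the weak triangle inequality $\gamma(a+b)\leq \gamma(2a)+\gamma(2b)$ lets me separate the propagated terms into an input part (class-$\mathcal{K}_\infty$ in $\|\bm{u}\|_\infty$), a bias part (class-$\mathcal{K}_\infty$ in $\|b_r\|_\infty$), and an initial-state part, matching the three-term structure of \eqref{eq:def:ISS}.

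The main obstacle is precisely this last, initial-state part: a naive propagation produces a term such as $\gamma_u^2\bigl(\beta^1(\|\bar{x}^1\|_\infty, 0)\bigr)$ that depends on the initial state but is constant in $k$, hence violates the asymptotically vanishing $\mathcal{KL}$ requirement of Definition \ref{def:ISS}. To recover the decay I would invoke the causality (fading memory) of ISS together with the time invariance of \eqref{eq:model:deepgru}: after first securing a uniform (non-decaying) bound, I restart the layer dynamics at the midpoint $m = \lfloor k/2\rfloor$, which is legitimate since $x(m)\in\mathcal{X}\subseteq\check{\mathcal{X}}$ for $m\geq\bar{k}$. Then the inner layer's dependence on its own initial condition, evaluated over the tail $[m,\infty)$, enters only through $\beta^1(\cdot, m)$, which decays in $k$, while the outer layer's initial condition enters through $\beta^2(\cdot, k-m)$, likewise decaying. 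Carrying this half-horizon splitting through the $M$-fold composition and taking the maximum over layers yields the required $\beta \in \mathcal{KL}$, $\gamma_u \in \mathcal{K}_\infty$ and $\gamma_b \in \mathcal{K}_\infty$, establishing \eqref{eq:def:ISS} for the deep network.
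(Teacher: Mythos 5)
Your proposal is correct, and at the top level it follows the same route as the paper: the paper's entire proof of Proposition \ref{thm:deep:iss} is the single sentence that the deep GRU ``can be considered as a cascade of ISS subsystems, and hence it is ISS,'' with a citation to Jiang and Wang for the cascade lemma. What you have done, in effect, is prove that cited lemma rather than invoke it. Your half-horizon splitting at $m = \lfloor k/2 \rfloor$ — needed precisely because naive gain propagation leaves a term like $\gamma_u^2\bigl(\beta^1(\|\bar{x}^1\|_\infty,0)\bigr)$ that is constant in $k$ — is exactly the standard mechanism hidden inside that citation, and your observation that the inner inputs $u^i = x^{i-1,+}$ violate Assumption \ref{ass:u_bounded} during the initial transient (so Theorem \ref{thm:single:iss} cannot be applied verbatim until Lemma \ref{lemma:deep:invset} has driven the states into $\mathcal{X}$) identifies a genuine detail that the paper's one-line proof silently steps over. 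The trade-off is clear: the paper's proof is shorter and leans on a known result, while yours is self-contained and closer in spirit to how the paper itself must handle the deep $\delta$ISS case in Proposition \ref{thm:deep:deltaiss}, where no off-the-shelf discrete-time cascade result is available and an explicit layer-by-layer estimate (a triangular, Schur matrix inequality) is carried out instead; indeed, your construction could be made fully explicit in the same way, recovering concrete exponential $\beta$ and linear gains rather than the abstract $\mathcal{KL}$/$\mathcal{K}_\infty$ functions your compositional argument produces.
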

\begin{proof}
    The deep GRU \eqref{eq:model:deepgru} can be considered as a cascade of ISS subsystems, and hence it is ISS, see \cite{jiang2001input}.
\end{proof}

\begin{proposition} \label{thm:deep:deltaiss}
	A sufficient condition for the $\delta$ISS of the deep GRU \eqref{eq:model:deepgru} is that each layer is $\delta$ISS, i.e. it satisfies \eqref{eq:single:deltaiss:condition}:
	\begin{equation} \label{eq:deep:deltaiss:condition}
		\| U_r^i \|_\infty \left( \frac{1}{4} {\widecheck{\lambda}}^i \| U_f^i \|_\infty +{\widecheck{\sigma}}_f^i \right) < 1 - \frac{1}{4} \frac{{\widecheck{\lambda}}^i +{\widecheck{\phi}}_r^i}{1 -{\widecheck{\sigma}}_z^i} \| U_z^i \|_\infty,
	\end{equation}
	$\forall i \in \{1, ..., M\}$, where, defining ${\widecheck{\lambda}}^0 = \max_{u \in \mathcal{U}} \| u \|_\infty = 1$,
	\begin{subequations} \label{eq:deep:deltaiss:sigmas}
	\begin{align}
		 {\widecheck{\sigma}}_z^i &= \sigma( \|  {\widecheck{\lambda}}^{i-1} W_z^i \quad {\widecheck{\lambda}}^i U_z^i \quad b_z^i \|_\infty ),\\
		 {\widecheck{\sigma}}_f^i &= \sigma( \|  {\widecheck{\lambda}}^{i-1} W_f^i \quad {\widecheck{\lambda}}^i U_f^i  \quad b_f^i \|_\infty ), \\
		 {\widecheck{\phi}}_r^i &= \phi( \| {\widecheck{\lambda}}^{i-1} W_r^i \quad {\widecheck{\lambda}}^i U_r^i \quad b_r^i \|_\infty ),
	\end{align}
	\end{subequations}
\end{proposition}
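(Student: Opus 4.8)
The plan is to follow the same route as the ISS result in Proposition~\ref{thm:deep:iss}: the deep GRU \eqref{eq:model:deepgru} is a cascade interconnection of single-layer GRUs, in which layer $i$ is driven by $u^i = x^{i-1,+}$ for $i \geq 2$ and by $u^1 = u$, and $\delta$ISS is preserved under such cascades. Accordingly I would establish two facts: that each layer, taken in isolation, is $\delta$ISS under condition \eqref{eq:deep:deltaiss:condition}; and that chaining the resulting layer-wise estimates yields a global bound of the form \eqref{eq:def:deltaiss}. Throughout, let $\Delta(\cdot)$ abbreviate the difference between the two trajectories (initialized at $\bar{x}_a, \bar{x}_b$ and fed by $\bm{u}_a, \bm{u}_b$) appearing in Definition~\ref{def:deltaISS}.

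The first step is to bound the input of each layer. By Assumption~\ref{ass:deep:initial_state} the initial state of layer $i-1$ satisfies $\| \bar{x}^{i-1} \|_\infty \leq \widecheck{\lambda}^{i-1}$, and Lemma~\ref{lemma:deep:invset} ensures that $\| x^{i-1}(k) \|_\infty$ is non-increasing until it enters $\mathcal{X}^{i-1}$; hence $\| x^{i-1}(k) \|_\infty \leq \widecheck{\lambda}^{i-1}$ for all $k$, so that $\| u^i(k) \|_\infty = \| x^{i-1,+}(k) \|_\infty \leq \widecheck{\lambda}^{i-1}$. Together with the convention $\widecheck{\lambda}^0 = \max_{u \in \mathcal{U}} \| u \|_\infty = 1$, this covers the first layer as well.

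Next I would replay the proof of Theorem~\ref{thm:single:deltaiss} for a single layer, the only change being that the standing input bound $\| u \|_\infty \leq 1$ is replaced by $\| u^i \|_\infty \leq \widecheck{\lambda}^{i-1}$. In that proof the input bound enters solely through the arguments of the gate activations; estimating $\| W_\star^i u^i + U_\star^i x^i + b_\star^i \|_\infty \leq \widecheck{\lambda}^{i-1} \| W_\star^i \|_\infty + \widecheck{\lambda}^i \| U_\star^i \|_\infty + \| b_\star^i \|_\infty$ reproduces verbatim the scaled saturation bounds $\widecheck{\sigma}_z^i$, $\widecheck{\sigma}_f^i$, $\widecheck{\phi}_r^i$ of \eqref{eq:deep:deltaiss:sigmas}, in which the factor $\widecheck{\lambda}^{i-1}$ multiplies the input weights $W_\star^i$ precisely because the input is now bounded by $\widecheck{\lambda}^{i-1}$ rather than by $1$. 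Condition \eqref{eq:deep:deltaiss:condition} then becomes the exact analogue of \eqref{eq:single:deltaiss:condition}, so each layer admits functions $\beta_\Delta^i \in \mathcal{KL}$ and $\gamma_{\Delta u}^i \in \mathcal{K}_\infty$ satisfying the layer-wise version of \eqref{eq:def:deltaiss} for the system with input $u^i$ and state $x^i$.

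Finally, I would propagate these estimates through the cascade. Since $\| \Delta u^i \|_\infty = \| \Delta x^{i-1,+} \|_\infty$, the incremental input seen by layer $i$ is a time-shift of the incremental state trajectory of layer $i-1$, which the layer-$(i-1)$ bound controls by a $\mathcal{KL}$ term in $\| \Delta \bar{x}^{i-1} \|_\infty$ and a $\mathcal{K}_\infty$ gain in $\| \Delta u \|_\infty$. Iterating from $i=1$ to $M$, and using that sums and compositions of $\mathcal{KL}$ and $\mathcal{K}_\infty$ functions remain in their respective classes, yields the desired global bound with $\beta_\Delta \in \mathcal{KL}$ in $\| \Delta \bar{x} \|_\infty = \max_i \| \Delta \bar{x}^i \|_\infty$ and $\gamma_{\Delta u} \in \mathcal{K}_\infty$ in $\| \Delta u \|_\infty$, exactly as in the $\delta$ISS cascade argument mirroring \cite{jiang2001input}. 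The main obstacle is the bookkeeping in this last step: because layer $i$ sees the \emph{entire} incremental trajectory of layer $i-1$ (through the supremum in the sequence infinity-norm) rather than a single terminal value, one must verify that the nested $\mathcal{KL}$--$\mathcal{K}_\infty$ compositions collapse into a single $\mathcal{KL}$ decay in the initial-state mismatch plus a single $\mathcal{K}_\infty$ gain in $\| \Delta u \|_\infty$. By comparison, the re-derivation in the third paragraph is essentially mechanical given Theorem~\ref{thm:single:deltaiss}, its only substantive content being the identification of the $\widecheck{\lambda}^{i-1}$ scaling of the input weights.
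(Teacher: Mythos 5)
Your first two steps coincide with the paper's own proof: the input of layer $i$ is bounded by $\widecheck{\lambda}^{i-1}$ via Assumption \ref{ass:deep:initial_state} and Lemma \ref{lemma:deep:invset}, and replaying the single-layer argument with the inflated gate bounds \eqref{eq:deep:deltaiss:sigmas} gives a layer-wise contraction under \eqref{eq:deep:deltaiss:condition}. The gap is in your last step, and it is exactly the obstacle you flag but do not resolve. If you compose the layers' $\delta$ISS estimates at the trajectory level, the gain of layer $i$ acts on $\|\Delta\bm{u}^i\|_\infty = \sup_t \|\Delta x^{i-1}(t+1)\|_\infty$; this supremum includes the transient of layer $i-1$ and therefore does not decay in $k$. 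Concretely, the composed bound contains a term of the form $\gamma^i_{\Delta u}\bigl(\beta^{i-1}_{\Delta}(\|\Delta\bar{x}^{i-1}\|_\infty,0)+\gamma^{i-1}_{\Delta u}(\cdot)\bigr)$, which is constant in $k$ and depends on the initial-state mismatch, so the resulting estimate is not of the form \eqref{eq:def:deltaiss}, where the initial-condition contribution must vanish as $k\to\infty$. The generic cascade result you invoke does not exist in the needed form: \cite{jiang2001input} covers ISS cascades (which is why Proposition \ref{thm:deep:iss} can be proved that way), while the paper explicitly remarks that $\delta$ISS-of-cascades results are available only for continuous-time systems \cite{angeli2002lyapunov} -- which is precisely why its proof avoids your route.

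What the paper does instead is stay at the level of the one-step inequalities: for each layer, $\lvert \Delta x_j^{i,+}\rvert \leq \alpha^i_{\Delta x}\|\Delta x^i\|_\infty + \alpha^i_{\Delta u}\|\Delta u^i\|_\infty$ with $\alpha^i_{\Delta x} \leq 1-\delta^i_{\Delta}$, and, since $\|\Delta u^i\|_\infty = \|\Delta x^{i-1,+}\|_\infty$ holds pointwise in time (not through a sequence supremum), the $M$ inequalities stack into a single vector difference inequality whose transition matrix $A_{M\Delta}$ is lower triangular with diagonal entries $1-\delta^i_{\Delta}$. Schur stability of $A_{M\Delta}$ gives $\|A^k_{M\Delta}\|_\infty \leq \mu_{\Delta}\tilde{\lambda}^k_{\Delta}$, and iterating yields the global exponential $\mathcal{KL}$ term plus a linear gain in $\|\Delta\bm{u}\|_\infty$ in one stroke. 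If you wanted to salvage your compositional route you would need a genuine time-splitting (asymptotic-gain) argument exploiting the exponential-plus-linear-gain structure of the layer estimates; as written, the appeal to a discrete-time $\delta$ISS cascade theorem is a missing idea, not mere bookkeeping.
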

\begin{proof}
    See \ref{appendix:deep}.
\end{proof}

It is worth noting that, to the best of authors' knowledge, results guaranteeing the $\delta$ISS of a cascade of $\delta$ISS subsystems are only available for continuous-time systems \cite{angeli2002lyapunov}.
While an extension of this result to discrete-time systems may be object of further research efforts, in the proof of Proposition \ref{thm:deep:deltaiss} we opted to assess this property limited to the specific case of deep GRUs.
Moreover, it should be noted the conditions required by Proposition \ref{thm:deep:deltaiss} may be very conservative.
To relax the conservativeness of the approach, it is possible to assume that the GRU network is initialized inside the invariant set, i.e. $ {\check{\mathcal{X}}}= \mathcal{X}$, so that bounds \eqref{eq:deep:deltaiss:sigmas} can be eased and condition \eqref{eq:deep:deltaiss:condition} can be relaxed, as discussed in the following Corollary.

\begin{corollary} \label{cor:deep:deltaiss_relaxed}
    A sufficient condition for the $\delta$ISS of the deep GRU network \eqref{eq:model:deepgru}, initialized in $ {\check{\mathcal{X}}}= \mathcal{X}$, is that 
    \begin{equation} \label{eq:deep:deltaiss:condition_relaxed}
		\| U_r^i \|_\infty \left( \frac{1}{4} \| U_f^i \|_\infty + \bar{\sigma}_f^i \right) < 1 - \frac{1}{4} \frac{1 + \bar{\phi}_r^i}{1 - \bar{\sigma}_z^i} \| U_z^i \|_\infty,
	\end{equation}
	$\forall i \in \{1, ..., M\}$, where
	\begin{subequations} \label{eq:deep:deltaiss:sigmas_relaxed}
	\begin{align}
		\bar{\sigma}_z^i &= \sigma( \|  W_z^i \quad U_z^i \quad b_z^i \|_\infty )\\
		\bar{\sigma}_f^i &= \sigma( \|  W_f^i \quad U_f^i  \quad b_f^i \|_\infty ) \\
		\bar{\phi}_r^i &= \phi( \| W_r^i \quad U_r^i \quad b_r^i \|_\infty ).
	\end{align}
	\end{subequations}
\end{corollary}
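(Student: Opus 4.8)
The plan is to follow exactly the same strategy used for the single-layer relaxation in Corollary \ref{cor:single:deltaiss_relaxed}, namely to specialize the general deep $\delta$ISS result of Proposition \ref{thm:deep:deltaiss} to the case in which the admissible initialization set coincides with the invariant set. First I would observe that, under the hypothesis ${\check{\mathcal{X}}} = \mathcal{X}$, Assumption \ref{ass:deep:initial_state} forces ${\widecheck{\lambda}}^i = 1$ for every layer $i \in \{1, \dots, M\}$. Combined with the boundary convention ${\widecheck{\lambda}}^0 = \max_{u \in \mathcal{U}} \| u \|_\infty = 1$ already fixed in the statement of Proposition \ref{thm:deep:deltaiss}, this means that every dilation factor entering the per-layer bounds is equal to unity.

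Next I would substitute ${\widecheck{\lambda}}^{i-1} = {\widecheck{\lambda}}^i = 1$ into the quantities \eqref{eq:deep:deltaiss:sigmas}. Since $\sigma$ and $\phi$ are applied to the infinity-norms of the stacked weight blocks, setting all scalings to one collapses each of these bounds onto its relaxed counterpart in \eqref{eq:deep:deltaiss:sigmas_relaxed}, i.e. ${\widecheck{\sigma}}_z^i = \bar{\sigma}_z^i$, ${\widecheck{\sigma}}_f^i = \bar{\sigma}_f^i$, and ${\widecheck{\phi}}_r^i = \bar{\phi}_r^i$. Propagating the same substitution through the $\delta$ISS inequality \eqref{eq:deep:deltaiss:condition} reproduces \eqref{eq:deep:deltaiss:condition_relaxed} layer by layer, so assuming the relaxed conditions for all $i$ is equivalent to assuming that each layer satisfies the hypothesis of Proposition \ref{thm:deep:deltaiss} with ${\widecheck{\lambda}}^i = 1$. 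The deep $\delta$ISS property then follows immediately by invoking that proposition.

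I do not expect any genuine obstacle here, as Proposition \ref{thm:deep:deltaiss} already carries the analytical weight; the corollary is a pure specialization. The only point that warrants a moment's care is verifying that the boundary convention ${\widecheck{\lambda}}^0 = 1$ remains consistent at the first layer ($i = 1$), so that the block $W_z^1, W_f^1, W_r^1$ is also left unscaled. This is automatic, since the first layer is fed by the unity-bounded input $u$ (Assumption \ref{ass:u_bounded}), for which ${\widecheck{\lambda}}^0 = 1$ holds independently of whether ${\check{\mathcal{X}}} = \mathcal{X}$.
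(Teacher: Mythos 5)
Your proposal is correct and follows essentially the same route as the paper: under ${\check{\mathcal{X}}} = \mathcal{X}$ one has ${\widecheck{\lambda}}^i = 1$ for all $i$ (with ${\widecheck{\lambda}}^0 = 1$ from Assumption \ref{ass:u_bounded}), the bounds \eqref{eq:deep:deltaiss:sigmas} collapse to \eqref{eq:deep:deltaiss:sigmas_relaxed}, and Proposition \ref{thm:deep:deltaiss} is invoked directly. Your extra remark on the consistency of the ${\widecheck{\lambda}}^0 = 1$ convention at the first layer is a harmless elaboration of what the paper states in one line.
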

\begin{proof}
    In light of Assumption \ref{ass:u_bounded}, ${\widecheck{\lambda}}^0 = \max_{u \in \mathcal{U}} \| u \|_\infty = 1$.
    Moreover, since $ {\check{\mathcal{X}}}= \mathcal{X}$, it follows that ${\widecheck{\lambda}}^i = 1$.
    Corollary \ref{cor:deep:deltaiss_relaxed} can thus be proven invoking Proposition \ref{thm:deep:deltaiss} with ${\widecheck{\lambda}}^i = 1$, for any $i \in \{ 1, ..., M\}$.
\end{proof}

\begin{remark} \label{rmk:weaker_deltaiss_deep}
    Note that the condition \eqref{eq:deep:deltaiss:condition_relaxed} involved by Corollary \ref{cor:deep:deltaiss_relaxed} is less conservative than the condition \eqref{eq:deep:deltaiss:condition} required by Proposition \ref{thm:deep:deltaiss}.
    Although Corollary \ref{cor:deep:deltaiss_relaxed} guarantees the $\delta$ISS only inside the invariant set $\mathcal{X}$, as discussed in Remark \ref{rmk:weaker_deltaiss_single}, it allows to state a similar but weaker stability-related property also if $\bar{x}_a \notin \mathcal{X}$ or $\bar{x}_b \notin \mathcal{X}$.
    Indeed, Lemma \ref{lemma:deep:invset} ensures that the state trajectories  exponentially converge into $\mathcal{X}$ in finite time, after which the $\delta$ISS property regularly applies.
\end{remark}
Finally, note that Proposition \ref{prop:iss_deltaiss} can be easily extended to deep GRUs to show that the $\delta$ISS condition \eqref{eq:deep:deltaiss:condition} implies the ISS condition \eqref{eq:deep:iss:condition}.

\section{Illustrative example} \label{sec:example}

\begin{figure}
	\centering
	\includegraphics[width=0.7 \linewidth]{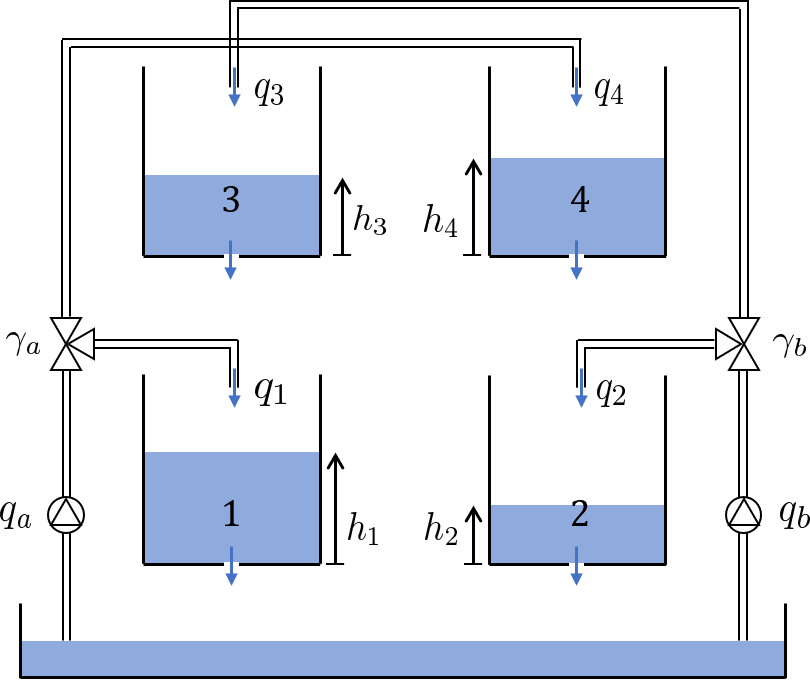}
	\caption{Scheme of the quadruple tank system.}
	\label{fig:example:qtscheme}
\end{figure}

\subsection{Benchmark system}
The proposed approach has been tested on the Quadruple Tank benchmark system reported in \cite{alvarado2011comparative}, with the goal of assessing the effects of the ISS and $\delta$ISS conditions on network's performances and training.
The system, depicted in Figure \ref{fig:example:qtscheme}, consists of four tanks containing water, with levels $h_1$, $h_2$, $h_3$, and $h_4$. 
Two controllable pumps supply the water flow rates $q_a$ and $q_b$ to the tanks.
The flow rate $q_a$ is split in $q_1$ and $q_4$ by a triple valve, so that $q_1 = \gamma_a q_a$ and $q_4 = (1 - \gamma_a) q_a$. Similarly, $q_b$ is split in $q_2$ and $q_3$, where $q_2 = \gamma_b q_b$ and $q_3 = (1 - \gamma_b) q_b$.
The system is hence characterized by the following equations \cite{alvarado2011comparative}:
\begin{equation}
\begin{aligned}
	\dot{h}_1 &= - \frac{a_1}{S} \sqrt{2gh_1} + \frac{a_3}{S} \sqrt{2gh_3} + \frac{\gamma_a}{S} q_a, \\
	\dot{h}_2 &= - \frac{a_2}{S} \sqrt{2gh_2} + \frac{a_4}{S} \sqrt{2gh_4} + \frac{\gamma_b}{S} q_b, \\
	\dot{h}_3 &= - \frac{a_3}{S} \sqrt{2gh_3} + \frac{1 - \gamma_b}{S} q_b, \\
	\dot{h}_4 &= - \frac{a_4}{S} \sqrt{2gh_4} + \frac{1 - \gamma_a}{S} q_a.
\end{aligned}
\end{equation}
The parameters of the system are reported in Table \ref{tab:example:parameters}. 
The control variables, expressed in $\frac{\text{m}^3}{\text{s}}$, are subject to saturation:
\begin{equation}
	q_a \in [ 0, 0.9 \cdot 10^{-3} ], \qquad q_b \in [ 0, 1.1 \cdot 10^{-3} ].
\end{equation}
The states are subject to physical constraints as well:
\begin{equation}
	h_{\{1, 2\}}\in [ 0, 1.36 ], \qquad h_{\{3, 4\}} \in [ 0, 1.3 ].
\end{equation}

\begin{table}
	\centering
	\caption{Benchmark system parameters}
	\label{tab:example:parameters}
	\begin{adjustbox}{max width=\columnwidth}
		\begin{tabular}{cccc|cccc}
		\toprule
		Parameter & Value & Units & $\,\,\,$ & $\,\,\,$ & Parameter & Value & Units \\
		\midrule 
		$a_1$ & $1.31 \cdot 10^{-4}$ & $\text{m}^2$  &&& $S$ & $0.06$ & $\text{m}^2$ \\
		$a_2$ & $1.51 \cdot 10^{-4}$ & $\text{m}^2$  &&& $\gamma_a$ & $0.3$ & \\
		$a_3$ & $9.27 \cdot 10^{-5}$ & $\text{m}^2$  &&& $\gamma_b$ & $0.4$ & \\
		$a_4$ & $8.82 \cdot 10^{-5}$ & $\text{m}^2$  &&& & & \\
		\bottomrule
		\end{tabular}
	\end{adjustbox}
\end{table}

It is assumed that only the levels $h_1$ and $h_2$ are measurable, i.e. the output of the system is $y = [h_1, h_2]^\prime$.
Therefore the system to be identified has two inputs, $u = [ q_a, q_b ]^\prime$, and two outputs.
Note that, to achieve proper results, the black-box model of this system should somehow implicitly model the two non-measurable states $h_3$ and $h_4$ and the states' saturation.

A simulator of the system has been implemented in Simulink, adding some white noise both to the inputs (standard deviation $5 \cdot 10^{-6}$) and to the measurements (standard deviation $0.005$).

{The ISS of the benchmark system can be easily verified, as it is an interconnection of sub-system, the tanks, which are ISS.
The $\delta$ISS of the system, instead, has been numerically assessed by means of Monte Carlo simulation campaign, in which the bounds to the $\delta$ISS functions $\beta_{\Delta}$ and $\gamma_{\Delta u}$ have been validated for $2000$ random initial conditions and input sequences.
Therefore, assuming that the system to identify is $\delta$ISS, in the following a model exhibiting the same property is trained. }

\subsection{Identification}
The simulator has been forced with Multilevel Pseudo-Ran\-dom Signals (MPRS) in order to properly excite the system, recording the input-output data with sampling time $\tau_s = 15 s$, so that enough data-points are collected in each transient.
The entire dataset is composed by $N_s = 30$ experiments, where each experiment $l \in \{ 1, ..., N_s \}$ is a collection of $T_s = 1500$ data-points $\big\{ u^{\{ l \}}(t)$, $y^{\{ l \}}(t) \big\}$.
The dataset has been split in a training set of $N_t = 20$ experiments, a validation set of $N_v = 5$ experiments, and a test set of {$N_{i} = 5$} experiment.
Moreover, in order to satisfy Assumption \ref{ass:u_bounded}, the dataset has been normalized so that for any data-point $u(t) \in [-1, 1]^2$ and $y(t) \in [-1, 1]^2$.

A deep GRU network with $M=3$ layers of $n_x = 7$ neurons has been implemented using TensorFlow 1.15 running on Python 3.7.
This network has been trained fulfilling the relaxed $\delta$ISS condition stated in Corollary \ref{cor:deep:deltaiss_relaxed}, so that it is guaranteed to be $\delta$ISS within its invariant set, as discussed in Remark \ref{rmk:weaker_deltaiss_deep}.
Since TensorFlow does not support constrained training, as discussed in Remark \ref{rmk:constraints} the constraint \eqref{eq:deep:deltaiss:condition_relaxed} must be relaxed. 
To do so, the following loss function is considered
\begin{equation} \label{eq:example:loss}
\begin{aligned}
    L =& \frac{1}{T_s - T_w} \sum_{k=T_w}^{T_s} \Big\| y(k, \bar{x}, \bm{u}^{\{l\}}) - y^{\{ l \}}(k) \Big\|^2 + \sum_{i=1}^{M} \rho(\nu^i),
\end{aligned}
\end{equation}
where $y(k, \bar{x}, \bm{u}^{\{l\}})$ denotes the open-loop prediction provided by the GRU \eqref{eq:model:deepgru}, initialized in the random state $\bar{x}$ and fed by the experiment's input sequence $\bm{u}^{\{l\}}$.
Therefore, the first part of the loss function $L$ is the prediction MSE associated to the $l$-th training sequence.
Specifically, MSE with a washout period $T_w=20$ is adopted, meaning that the prediction error in the first $T_w$ steps is not penalized, to accommodate the effects of the random initialization of the network \cite{bianchi2017overview}.
The second term of the loss function, $\rho(\nu^i)$, penalizes the violation of constraint \eqref{eq:deep:deltaiss:condition_relaxed} for each layer $i \in \{1, ..., M\}$. 
In particular, defining the constraint residual as
\begin{equation} \label{eq:example:residual}
	\nu^i = \| U_r^i \|_\infty \left( \frac{1}{4} \| U_f^i \|_\infty + \bar{\sigma}_f^i \right) - 1 + \frac{1}{4} \frac{1 + \bar{\phi}_r^i}{1 - \bar{\sigma}_z^i} \| U_z^i \|_\infty,
\end{equation}
where $\bar{\sigma}_f$, $\bar{\sigma}_z$, and $\bar{\phi}_r$ are defined as in \eqref{eq:deep:deltaiss:sigmas_relaxed}, it is evident that the constraint \eqref{eq:deep:deltaiss:condition_relaxed} is fulfilled if $\nu^i < 0$, otherwise it is violated. 
Denoting by $\varepsilon_{\nu} > 0$ the violation clearance,  $\rho(\nu^i)$ can be designed as a piece-wise linear cost, 
\begin{equation}
	\begin{aligned}
		\rho(\nu^i) = \rho\uss{+} \big[ \max ( \nu^i, -\varepsilon_{\nu} ) + \varepsilon_{\nu} \big] + \rho\uss{-} \big[ \min ( \nu^i, -\varepsilon_{\nu} ) + \varepsilon_{\nu} \big],
	\end{aligned}
\end{equation}
where $\rho\uss{+}$ and $\rho\uss{-} \ll \rho\uss{+}$ are hyperparameters that must to be tuned empirically. 
In this way $\nu^i$ is steered towards values smaller than $-\varepsilon_{\nu}$ while avoiding unnecessarily large residuals.
Furthermore, the weight $\rho\uss{+}$ should be sufficiently small to prioritize MSE's minimization.
In this example, we adopted $\rho\uss{+} = 2 \cdot 10^{-4}$,  $\rho\uss{-} = 2 \cdot 10^{-6}$, and a clearance $\varepsilon_\nu = 0.05$.

We carried out the training procedure using RMSProp as optimizer \cite{goodfellow2016deep}.
At each step of the training procedure, the loss function \eqref{eq:example:loss} is optimized for a batch given by a single sequence $l$. 
At the end of each training epoch, the training sequences are shuffled and, to avoid overfitting, an early-stopping rule is evaluated -- which halts the training when condition \eqref{eq:deep:deltaiss:condition_relaxed} is satisfied for all layers and the MSE on the validation dataset stops reducing.

\begin{figure}
	\centering
	\includegraphics[clip, width=\linewidth]{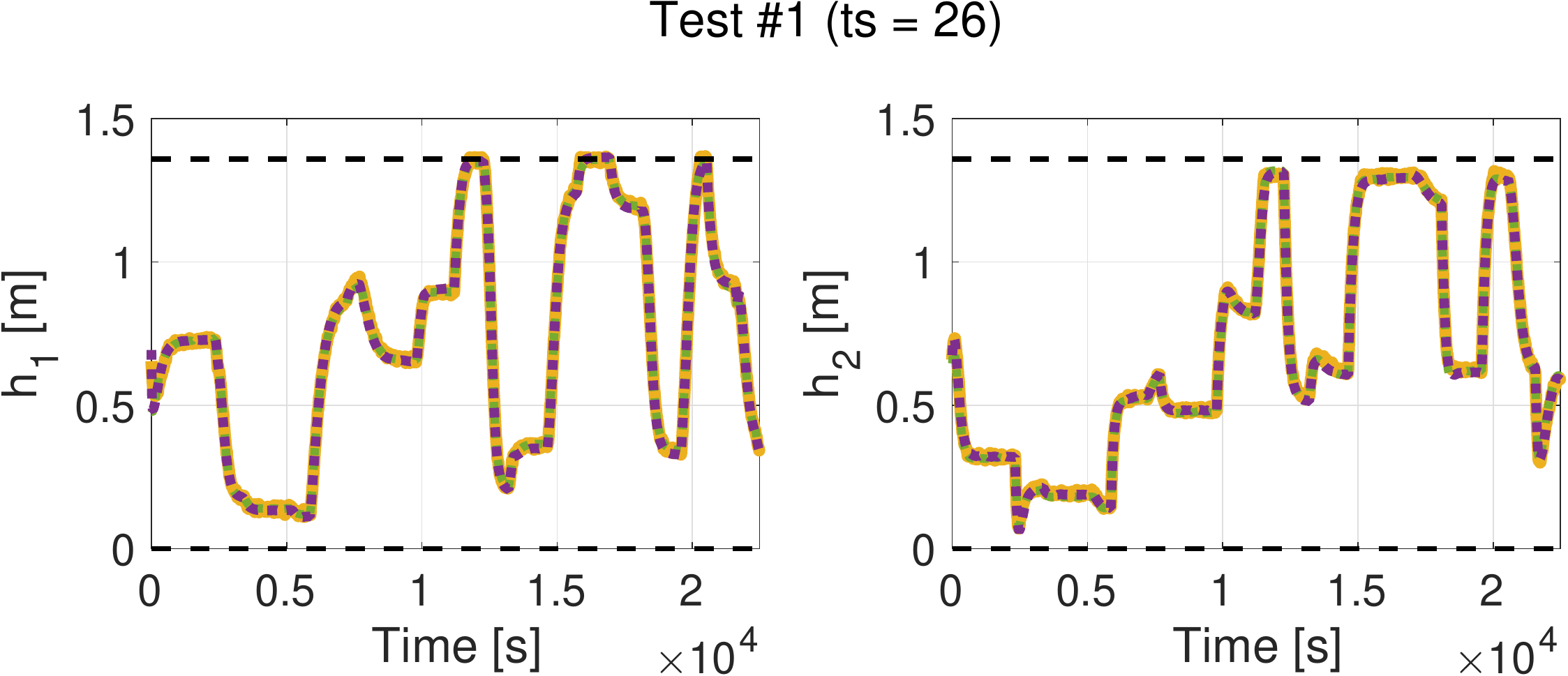} 
	\vspace{1mm}
	
	\includegraphics[clip, width=\linewidth]{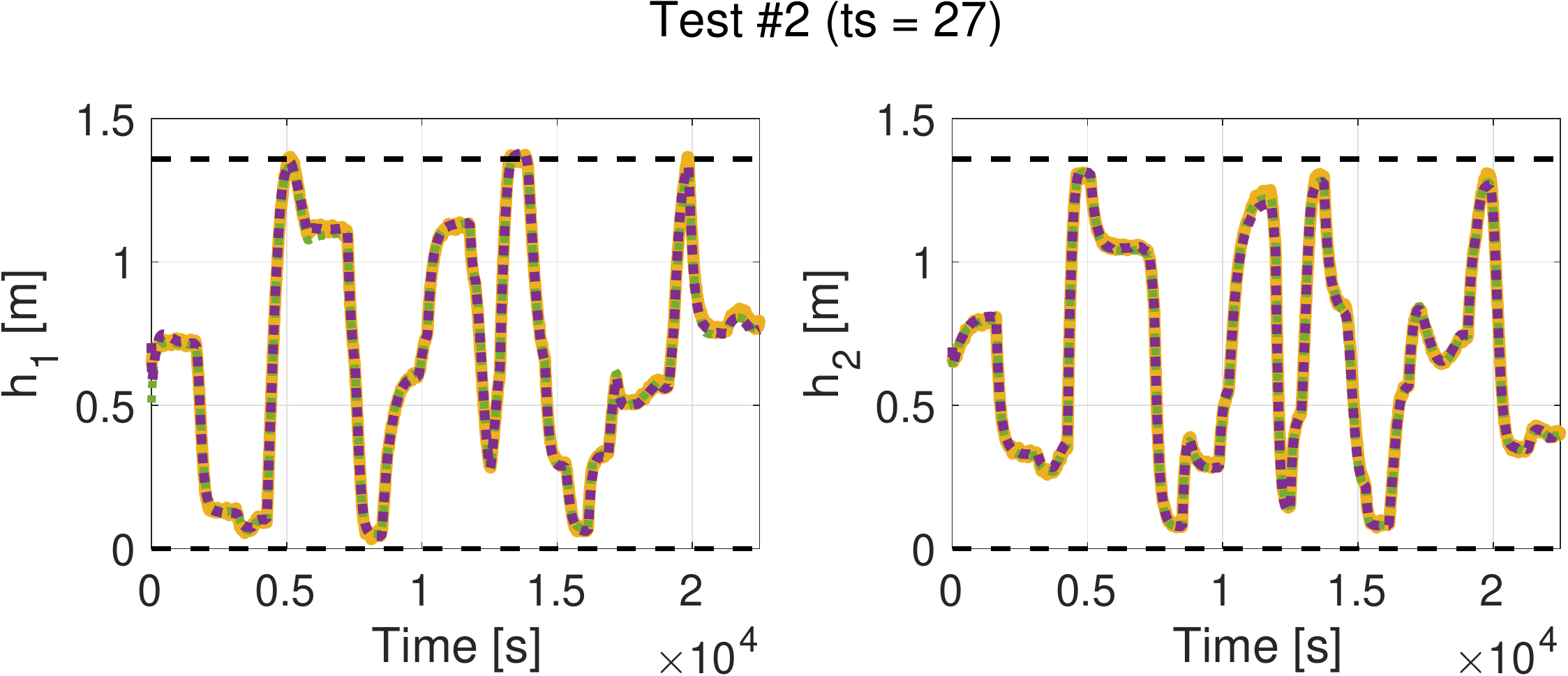}
	\vspace{1mm}
	
	\includegraphics[clip, width=\linewidth]{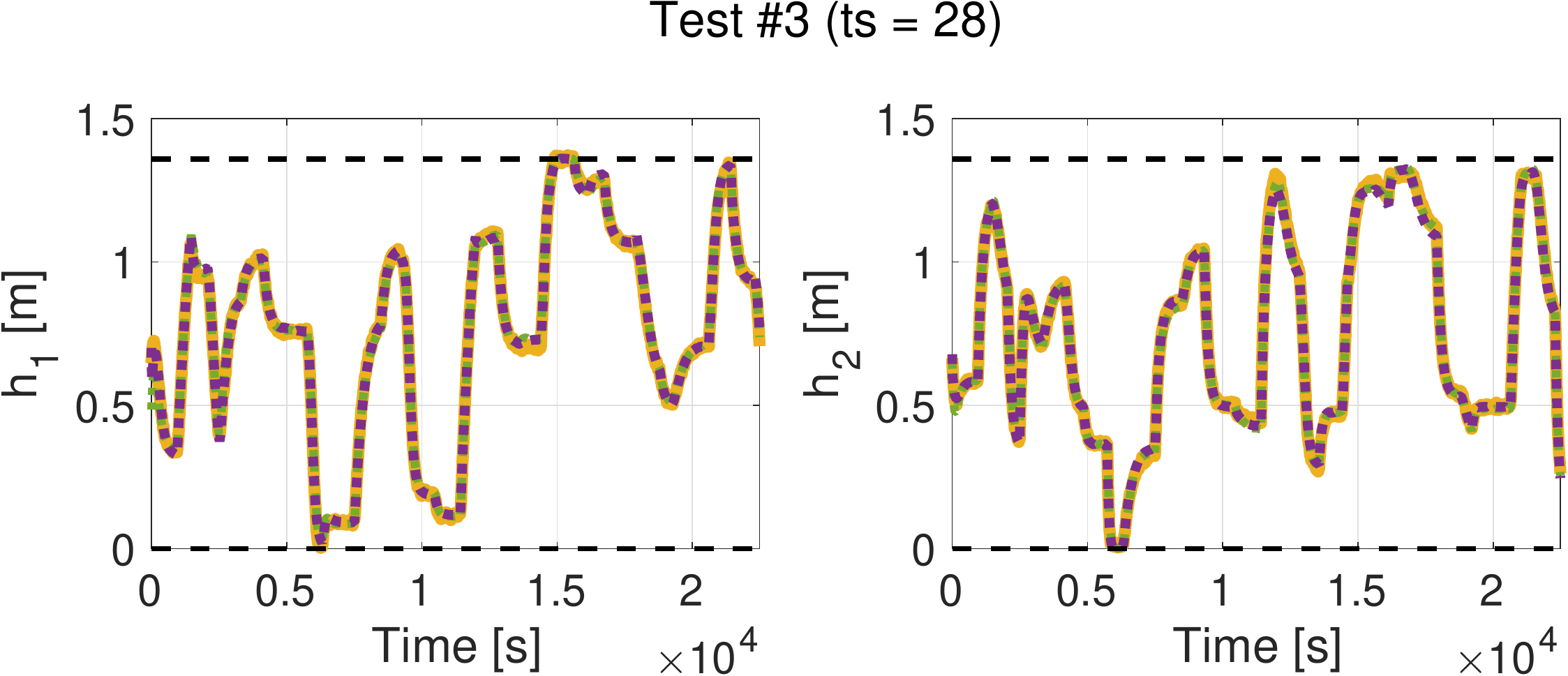}
	\vspace{1mm}
	
	\includegraphics[clip, width=\linewidth]{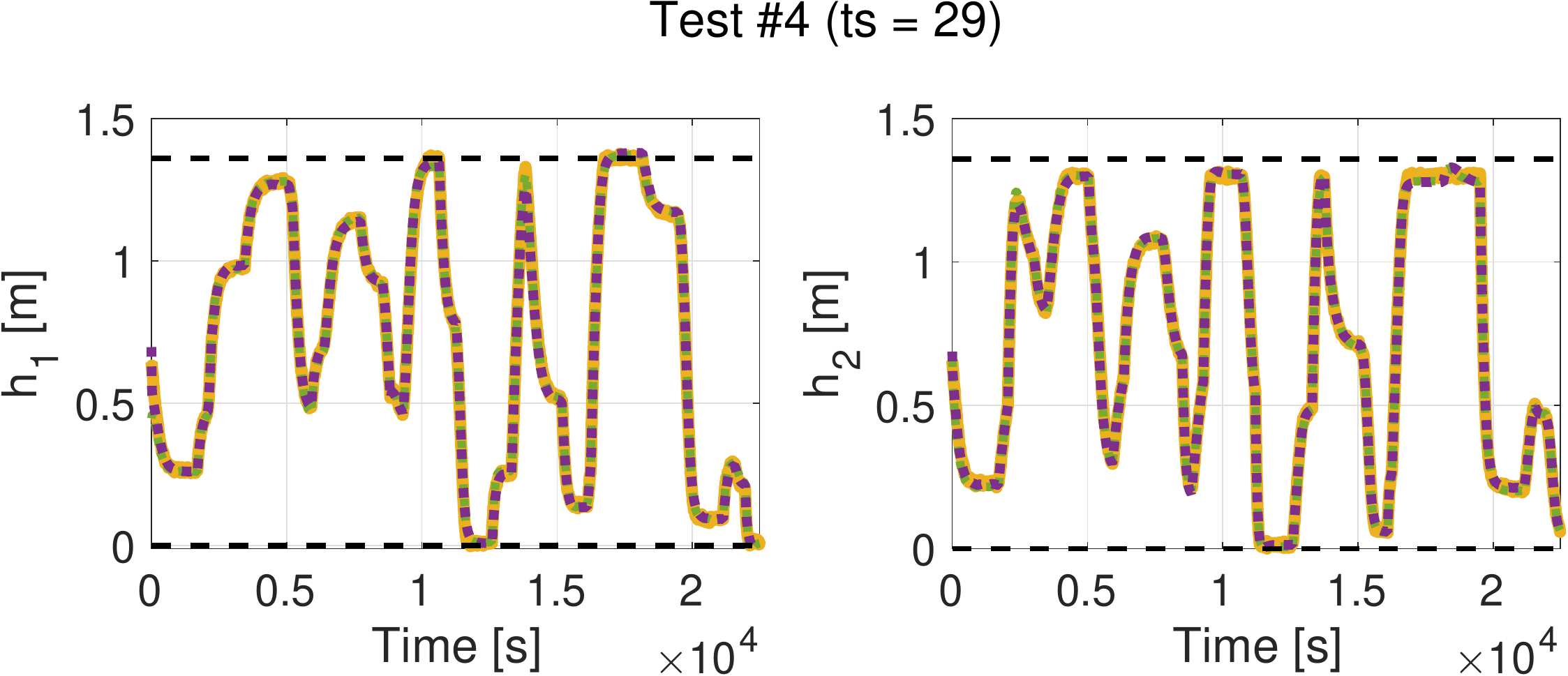}
	\vspace{1mm}
	
	\includegraphics[clip, width=\linewidth]{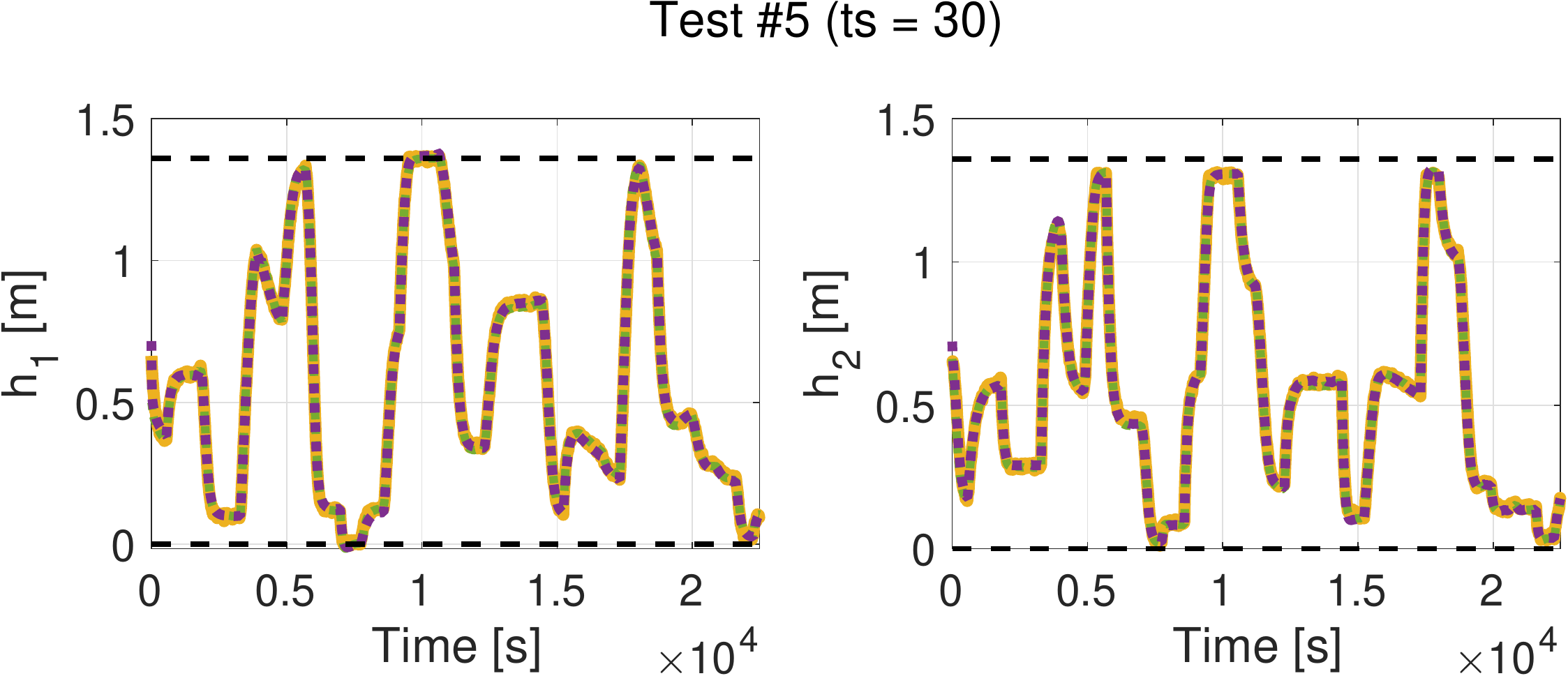}
	\vspace{1mm}
	
	\caption{Performances of the trained $\delta$ISS network (purple dotted line) on {the $N_i$ independent test sequences}, compared to the measured output (yellow solid line) and to a neural network with the same structure trained without the $\delta$ISS condition (green dotted line). Note that, due to the accuracy of the trained GRUs, the three lines are almost overlapping.}
	\label{fig:example:test}
\end{figure}

Finally, the  modeling performances of the trained network are assessed on  the {$N_i$ independent test sequences. The modeling performances are quantified on each test sequence $ts \in \{ N_s - N_i + 1, ..., N_s \} $} via the FIT index, defined as
\begin{equation}
	\text{FIT} = 100 \left( 1 - \frac{ \| \bm{y}\big(\bar{x}, \bm{u}^{\{ ts\}}\big) - \bm{y}^{\{ts\}} \| }{\| \bm{y}^{\{ts\}} - y_{avg} \|} \right),
\end{equation} 
where $\bm{y}(\bar{x}, \bm{u}^{\{ ts\}})$ denotes the open-loop simulation of the network, initialized in a random initial state $\bar{x}$ and fed by the test-set's input sequence $\bm{u}^{\{ts\}}$, while $y_{avg}$ is the mean of the test-set output sequence $\bm{y}^{\{ts\}}$. 

Figure \ref{fig:example:test} shows, {for each test sequence}, a comparison among the open-loop prediction by the trained $\delta$ISS network network, the real measured output, and the open-loop prediction by a network with the same structure but trained without enforcing the $\delta$ISS condition (i.e. with $\rho\uss{+} = \rho\uss{-} = 0$).
Remarkably, both the GRUs show impressive modeling capabilities, and they are able to model outputs' saturation.
Table \ref{tab:example:results} reports information about the training of the networks and the performances achieved. 
{The FIT index scored by the $\delta$ISS GRU network is, on average, $97.05 \%$ (min: $96.29 \%$, max: $97.5 \%$), while the FIT index scored by the unconstrained network is, on average, $97.21 \%$ (min: $96.58 \%$, max: $97.68 \%$).}

{It is worth highlighting that, unsurprisingly, the GRU trained without enforcing the stability condition fails to meet the $\delta$ISS sufficient condition, since its residuals $\nu^i$ are largely positive.}

{Finally, note that one may expect the conservativeness of the devised $\delta$ISS condition to come at the price of lower modeling performances, since \eqref{eq:deep:deltaiss:condition_relaxed} entails a bound on the maximum gains of the gates. 
However, results suggest that when the actual system generating the data itself enjoys the same stability property, the $\delta$ISS condition does not harm the modeling capabilities of the network. 
Nonetheless, as expected, the enforcement of such stability condition slows down the training of the network, meaning that more epochs are required to achieve the same modeling performances. 
This behavior is explained by the fact that, if the loss function's hyperparameters have been properly designed, at the beginning of the training procedure the order of the $\delta$ISS residual penalization term $\rho(\nu^i)$ is comparable to the order of the prediction's MSE. Hence, during the first epochs, the training algorithm trades modeling performances with the reduction of the $\delta$ISS residual.
}

\begin{table}
	\centering
	\begin{adjustbox}{max width=\columnwidth}
		\begin{tabular}{cccccc}
		    \toprule
		    \multirow{3}{*}{\shortstack[c]{$\delta$ISS \\ condition \\ enforced}} &
		    \multirow{3}{*}{\shortstack[c]{Average \\ FIT}}  &
		    \multirow{3}{*}{\shortstack[c]{Average \\ Validation\\MSE}} &
		    \multirow{3}{*}{\shortstack[c]{Average \\ Test\\MSE}} &
		    \multirow{3}{*}{\shortstack[c]{Train \\ Epochs}} &
		    \multirow{3}{*}{\shortstack[c]{Residuals \\ $\nu^i$}}  \\ \\ \\
			\midrule 
			Yes & $97.05\%$ & $3.4 \cdot 10^{-4}$ & $4.4 \cdot 10^{-4}$ & $640$ & $< -0.05$ \\
			No & $97.21\%$ & $3.2 \cdot 10^{-4}$ & $3.9 \cdot 10^{-4}$ & $383$ & $>3000$ \\
			\bottomrule
		\end{tabular}
	\end{adjustbox}
	\caption{Comparison of the results achieved by the GRUs trained with and without enforcing the $\delta$ISS condition.}
	\label{tab:example:results}
\end{table}

\section{Conclusions}
In this paper, sufficient conditions for the Input-to-State Stability (ISS) and Incremental Input-to-State stability ($\delta$ISS) of single-layer and deep Gated Recurrent Units (GRUs) have been devised, and guidelines on their implementation in a common training environment have been discussed.
{When GRUs are used to learn stable systems, the devised stability conditions allow to guarantee that the trained networks are ISS/$\delta$ISS, which is particularly useful during the synthesis of state observers and model predictive controllers.}
The proposed architecture has been tested on the Quadruple Tank benchmark system, showing remarkable modeling performances.
{Results suggest that, as long as the system to be identified is stable, the enforcement of stability conditions does not harm the modeling performances of GRUs, but it rather slows down the network's training procedure. }

{
\section*{Acknowledgments}
The authors would like to thank the Editor and the Reviewers for their valuable comments and suggestions.}

\bibliography{GRU_SCL_arxiv}
\appendix

\section{Proofs for single-layer GRUs} \label{appendix:single}
In the following, the proofs associated to the single-layer GRU model \eqref{eq:model:gru} are reported.
\bigskip 

\noindent \textbf{\emph{Proof of Lemma \ref{lemma:single:invset}}} \\ 
\noindent \emph{Case a.} $\bar{x} \in \mathcal{X}$
	
Applying Lemma \ref{lemma:single:invset_ini} iteratively it follows that  $x(k) \in \mathcal{X}$ for any $k \geq \bar{k} = 0$.
	
\medskip
\noindent \emph{Case b.} $\bar{x} \notin \mathcal{X}$, i.e. $\| \bar{x} \|_\infty > 1$ 

Consider the $j$-th component of \eqref{eq:model:proof_invset_ini:ltv}. 
In light of \eqref{eq:model:sigma_bounds}, at any time instant $\omega_j(k) \in (0, 1)$ and $\eta_j(k) \in (-1, 1)$.
More specifically, there exist $\ubar{\omega}(k)$, $\bar{\omega}(k)$, and $\ubar{\varepsilon}(k)$ such that $0 < \ubar{\omega}(k) \leq \omega_j(k) \leq \bar{\omega}(k) < 1$ and $ \lvert \eta_j(k) \lvert \leq 1 - \ubar{\varepsilon}(k) < 1$, for any component $j \in \{ 1, ..., n_x \}$.
	
These bounds are built as follows. 
By definition of infinity-norm, and since $\sigma$ is continuous and monotonically increasing, it follows that
\begin{equation*}
\begin{aligned}
    \omega_j(k) &\leq \max_{u} \left\| \sigma \left( \big[ W_z \quad U_z  \quad b_z \big] \, \begin{bmatrix}
		u \\ x(k) \\ 1_{n_x, 1}
		\end{bmatrix}\right) \right\|_\infty \\
	& \leq \max_{u}  \sigma \left( \big\| W_z \quad U_z \, \| x(k) \|_\infty \quad b_z \big\|_\infty \, \left\| \begin{array}{c}
		u \\ 1_{n_x, 1} \\ 1_{n_x, 1}
	\end{array} \right\|_\infty \right).
\end{aligned}
\end{equation*}
In light of Assumption \ref{ass:u_bounded}, the upper bound $\bar{\omega}_k$ can be  computed as
\begin{subequations} \label{eq:single:invset:bounds}
    \begin{equation}
        \bar{\omega}(k) = \sigma( \big\| W_z \quad U_z \, \| x(k) \|_\infty \quad b_z \big\|_\infty).
    \end{equation}
Moreover, owing to the symmetry of $\sigma$,
    \begin{equation}
        \ubar{\omega}(k) = \sigma ( -\big\| W_z \quad U_z  \, \| x(k) \|_\infty \quad b_z \big\|_\infty) = 1 - \bar{\omega}(k).
    \end{equation}
By similar arguments it is easy to show that
    \begin{equation}
	    \ubar{\varepsilon}(k) = 1 - \phi ( \big\| W_r \quad U_r \, \| x(k) \|_\infty   \quad b_r \big\|_\infty).
	\end{equation}
\end{subequations}
Taking the absolute value of \eqref{eq:model:proof_invset_ini:ltv}, one gets
\begin{equation} \label{eq:single:invset:int0}
    \lvert x_j(k+1) \lvert \, \leq \, \omega_j(k) \, \lvert x_j(k) \lvert + (1 - \omega_j(k)) \, \lvert \eta_j(k) \lvert.
\end{equation}
Then, if $\lvert x_j(k) \lvert \leq 1$, Lemma \ref{lemma:single:invset_ini} guarantees that $ \lvert x_j(\tilde{k}) \lvert \leq 1$, for any $\tilde{k} \geq k$.
If instead $\lvert x_j(k) \lvert > 1$, subtracting $\lvert x_j(k) \lvert$ from both sides of \eqref{eq:single:invset:int0}, we get
\begin{equation} \label{eq:single:invset:int1}
	\begin{aligned}
		\lvert x_j(k+1) \lvert - \lvert x_j(k) \lvert \leq -& (1 - \omega_j(k)) \lvert x_j(k) \lvert \\
	    &+ (1-\omega_j(k)) \, \lvert \eta_j(k) \lvert.
	\end{aligned}
\end{equation}
Then, {recalling that by definition $\lvert \eta_j(k) \lvert 1 - \ubar{\varepsilon}(k)$, noting that $\lvert x_j(k) \lvert > 1$ implies $-\lvert x_j(k) \lvert + 1 < 0$,} the following chain of inequalities holds
\begin{equation} \label{eq:single:invset:int2}
	\begin{aligned}
	    \lvert x_j(k+1) \lvert - \lvert x_j(k) \lvert &\leq  (1 - \omega_j(k)) \, \left( - \lvert x_j(k) \lvert + \lvert \eta_j(k) \lvert \right) \\
	    & < \ubar{\omega}(k) \, \left( - \lvert x_j(k) \lvert + 1 - \ubar{\varepsilon}(k) \right)  \\
	    & < - \ubar{\omega}(k) \, \ubar{\varepsilon}(k),
	\end{aligned}
\end{equation}
which entails that  $\lvert x_j(k+1) \lvert < \lvert x_j(k) \lvert$. 
Hence, repeating this argument for any state component, it follows that as long as $\| x(k) \|_\infty > 1$, $\| x(k+1) \|_\infty < \| x(k) \|_\infty$, which proves the first claim.
Note that, iterating this argument, it follows that 
{\begin{equation} \label{eq:single:invset:x_lessthan_x0}
    \| x(k) \|_\infty \leq \max( \| \bar{x} \|_\infty, 1),
\end{equation}
i.e. if $\bar{x} \notin \check{\mathcal{X}}$, the norm of the initial state is an upper bound of the norm of the state vector at any following time instant. 
Recalling that $\sigma$ and $\phi$ are monotonically increasing functions, \eqref{eq:single:invset:x_lessthan_x0} implies that the argument of the bounds stated in \eqref{eq:single:invset:bounds} is smaller than at the initial time instant.}
The bounds $\ubar{\omega}(0)$, $\bar{\omega}(0)$, and $\ubar{\varepsilon}(0)$ -- which can be easily computed, since the initial state $\bar{x}$ is known -- thus hold at any following time instant, i.e.
\begin{subequations} \label{eq:single:invset:bounds_k}
	\begin{gather}
		0 < \ubar{\omega}(0) \leq \ubar{\omega}(k) \leq \omega_j(k) \leq \bar{\omega}(k) \leq \bar{\omega}(0) < 1, \\
		\lvert \eta_j(k)  \lvert  \, \leq \, 1 -\ubar{\varepsilon}(k) \leq \, 1 - \ubar{\varepsilon}(0) < 1,
	\end{gather}
\end{subequations}
for any component $j \in \{ 1, ..., n_x \}$ and any instant $k \geq 0$.
Therefore, in light of \eqref{eq:single:invset:bounds_k}, \eqref{eq:single:invset:int2} implies that $x_j$ enters $[-1, 1]$ at most at the following time-step
\begin{equation}
    \bar{k}_j = \left\{ \begin{array}{lcl} 
    \left\lceil \frac{\lvert \bar{x}_j \lvert - 1}{\ubar{\omega}(0) \, \ubar{\varepsilon}(0)} \right\rceil & \quad & \text{if} \, \lvert \bar{x}_j  \lvert > 1 \\
    0  & \quad & \text{if} \, \lvert \bar{x}_j  \lvert \leq 1
    \end{array}
    \right. .
\end{equation}
The second claim is hence proven taking $\bar{k} = \max_j \bar{k}_j$.

Finally, we show that the convergence of each state component $x_j$ into its invariant set $[-1, 1]$ is exponential.
To this purpose, let us {write the evolution of the $j$-th state of system \eqref{eq:model:proof_invset_ini:ltv}} as $x_j(k) = x_{\alpha j}(k) + x_{\beta j}(k)$, where
\begin{subequations}
	\begin{align}
		x_{\alpha j}(k) &= \bigg( \prod_{t=0}^{k-1} \omega_j(t) \bigg) \bar{x}_j, \\
		x_{\beta j}(k) &= \sum_{t=0}^{k-1} \bigg( \prod_{h=t+1}^{k-1} \omega_j(h)\bigg) (1 - \omega_j(t)) \, \eta_j(t). \label{eq:model:proof_invset:motions:forced}
	\end{align}
\end{subequations}

\noindent Note that $x_{\alpha j}$ converges to zero. Indeed, in light of bounds \eqref{eq:single:invset:bounds_k}, 
\begin{equation}
    \lvert x_{\alpha j}(k) \lvert \leq \left\lvert \prod_{t=0}^{k-1} \omega_j(t) \right\lvert \, \lvert \bar{x}_j \lvert \, \leq \,  \big(\bar{\omega}(0)\big)^k \, \lvert \bar{x}_j \lvert
\end{equation}
tends to zero as $k \to \infty$. 
Concerning $x_{\beta j}$, taking the absolute value of \eqref{eq:model:proof_invset:motions:forced}, {and applying the bounds \eqref{eq:single:invset:bounds_k}}, one gets
\begin{equation}
    \begin{aligned}
        \lvert x_{\beta j}(k) \lvert & \leq \left\lvert \sum_{t=0}^{k-1} \left[ \prod_{h=t+1}^{k-1} \omega_j(h) - \prod_{h=t}^{k-1} \omega_j(h) \right]  \eta_j(t) \right\lvert \\
		& \leq  \left[ 1 - \prod_{t=0}^{k-1} \omega_j(t) \right] \, (1 - \ubar{\varepsilon}(0))  \\
		& \leq \big[ 1- \big( \ubar{\omega}(0) \big)^k \big] \, ( 1 - \ubar{\varepsilon}(0)).
    \end{aligned}
\end{equation}
By the triangular inequality $\lvert x_j(k) \lvert \leq \lvert x_{\alpha j}(k) \lvert + \lvert x_{\beta j}(k) \lvert$, which leads to
\begin{equation}
		\lvert x_j(k) \lvert \leq \big( \bar{\omega}(0) \big)^k \, \lvert \bar{x}_j \lvert + \Big[ 1 - \big(\ubar{\omega}(0)\big)^k \Big]  \, \left( 1 - \ubar{\varepsilon}(0) \right).
\end{equation}
This proves the third and last claim of Lemma \ref{lemma:single:invset}. $\hfill \blacksquare$
\bigskip

\noindent \textbf{\emph{Proof of Theorem \ref{thm:single:iss}}} \\
    \noindent\emph{Case a.} $\bar{x} \in {\check{\mathcal{X}}} = \mathcal{X}$ 

	Consider the $j$-th component of $x$, and let us remind that the time index $k$ is herein omitted for compactness. Let $z_j = z(u, x)_j$,  $f_j = f(u, x)_j$, and $r_j = r(u, x)_j$. Then from \eqref{eq:model:gru} it follows that
	\begin{equation*}
		x_j^+ = z_j \, x_j + (1 - z_j) \,  r_j.
	\end{equation*}
	Taking the absolute value, and since $z_j \in (0, 1)$, the previous equality becomes
	\begin{equation} \label{eq:single:iss:xj_abs}
		\lvert x_j^+ \lvert \leq  z_j \, \lvert x_j \lvert + (1 - z_j) \,  \left\lvert r_j \right\lvert.
	\end{equation}
	
	In light of Assumption \ref{ass:u_bounded} and Lemma \ref{lemma:single:invset},
	$\| u \|_\infty \leq 1$ and  $\| x \|_\infty \leq \| \bar{x} \|_\infty \leq 1$, since $\check{\mathcal{X}} = \mathcal{X}$ implies $\widecheck{\lambda} = 1$. 
	Hence the forget gate can be bounded as $\| f(u, x) \|_\infty\leq \bar{\sigma}_f$, where 
	\begin{equation} \label{eq:single:iss:sigma_f_bar}
		\begin{aligned}		
		\bar{\sigma}_f &= \max_{u, x} \left\| \sigma \left( \big[ W_f \quad U_f \quad b_f \big] \, \begin{bmatrix}
		u \\ x \\ 1_{n_x, 1}
		\end{bmatrix}\right) \right\|_\infty \\
		& = \max_{u, x}  \sigma \left( \left\| W_f \quad U_f \quad b_f \right\|_\infty \, \left\| \begin{array}{c}
			u \\ x \\ 1_{n_x, 1}
			\end{array} \right\|_\infty \right) \\
		& = \sigma \left( \left\| W_f \quad U_f \quad b_f \right\|_\infty \right). 
		\end{aligned}
	\end{equation}
	Analogously, it holds that $\| z(u, x) \|_\infty \leq \bar{\sigma}_z$, where
	\begin{equation} \label{eq:single:iss:sigma_z_bar}	
		\bar{\sigma}_z = \sigma \left( \left\| W_z \quad U_z \quad b_z \right\|_\infty \right).
	\end{equation}
	Recalling the definition of $r$ given in \eqref{eq:model:r_def}, owing to the Lipschitzianity and monotonicity of $\phi(\cdot)$, it holds that
	\begin{equation} \label{eq:single:iss:utilde_bound}
	\begin{aligned}
	\left\lvert r_j \right\lvert &  \leq \| r \|_\infty \leq \phi \left( \Big\| W_r \, u + U_r \, f \circ x + b_r \Big\|_\infty \right) \\
	& \leq \| W_r \|_\infty \, \| u \|_\infty + \| U_r \|_\infty \, \bar{\sigma}_f \, \| x \|_\infty + \| b_r \|_\infty
	\end{aligned}
	\end{equation}
	
	Noting that $\lvert x_j \lvert \leq \| x \|_\infty$, and applying \eqref{eq:single:iss:utilde_bound},  inequality \eqref{eq:single:iss:xj_abs} can thus be recast as	
	\begin{equation} \label{eq:single:iss:xj_abs_int}
	\begin{aligned}
		\lvert x_j^+ \lvert \, \leq \, & \Big[ z_j + (1 - z_j ) \| U_r \|_\infty \, \bar{\sigma}_f \Big] \, \| x \|_\infty \\
		& + (1 - z_j)  \| W_r \|_\infty \, \| u \|_\infty + (1 - z_j) \| b_r \|_\infty.
	\end{aligned}
	\end{equation}
	Since by definition $z_j \in [1 - \bar{\sigma}_z, \bar{\sigma}_z] \subseteq (0, 1)$ for any $j$, condition \eqref{eq:single:iss:condition} implies that there exists some $\delta \in (0, 1)$ such that
	\begin{equation*}
		z_j + (1 - z_j) \| U_r \|_\infty \, \bar{\sigma}_f \leq 1 - \delta,
	\end{equation*}
	allowing to re-write \eqref{eq:single:iss:xj_abs_int} as
	\begin{equation} \label{eq:single:iss:x_norm} 
		\| x^+ \|_\infty \leq ( 1 - \delta ) \| x \|_\infty + \bar{\sigma}_z \, \| W_r \|_\infty \| u \|_\infty + \bar{\sigma}_z \, \| b_r \|_\infty.
	\end{equation}
	Iterating \eqref{eq:single:iss:x_norm}, it is possible to derive that
	\begin{equation} \label{eq:single:iss:final} 
		\| x(k) \|_\infty \leq ( 1 - \delta )^k \| \bar{x} \|_\infty + \frac{\bar{\sigma}_z}{\delta} \, \| W_r \|_\infty \| \bm{u} \|_\infty + \frac{\bar{\sigma}_z}{\delta} \, \| b_r \|_\infty,
	\end{equation}
	{where the coefficients of $\| \bm{u} \|_\infty$ and $\| b_r \|_\infty$ have been majorized by the geometric series' limit $\sum_{t=0}^{k-1} (1-\delta)^t \leq \frac{1}{\delta}$.}
	
	Hence, system \eqref{eq:model:gru} is ISS with  $\beta(\| \bar{x} \|_\infty, k) = (1 - \delta)^k \| \bar{x} \|_\infty$, $\gamma_u(\| \bm{u} \|_\infty) = \frac{\bar{\sigma}_z}{\delta} \| W_r \|_\infty \| \bm{u} \|_\infty$, and $\gamma_b(\| b_r \|_\infty) = \frac{\bar{\sigma}_z}{\delta} \| b \|_\infty$.
	
	\medskip
	\noindent \emph{Case b.} $\bar{x} \in {\check{\mathcal{X}}} \supset \mathcal{X}$ 
	
    In light of Lemma \ref{lemma:single:invset}, the state trajectory $x$ converges into the invariant set $\mathcal{X}$ within a finite time instant $\bar{k}$.
    {Since for $k < \bar{k}$ the convergence into $\mathcal{X}$ is exponential and it is independent of the input sequence applied, for any $\delta \in (0, 1)$ there exists a sufficiently large $\mu > 0$ such that 
    \begin{equation} \label{eq:single:iss:before_kbar}
        \| x(k) \| \leq \mu (1 - \delta)^k \| \bar{x} \|_\infty.
    \end{equation}
    As soon as the state enters the invariant set, i.e. at $k = \bar{k}$, case \emph{a} applies. 
    Hence, for $k \geq \bar{k}$ it holds that
    \begin{equation}\label{eq:single:iss:after_kbar}
    \begin{aligned}
        \| x(k) \|_\infty \leq& ( 1 - \delta )^{k - \bar{k}} \| x(\bar{k}) \|_\infty + \frac{\bar{\sigma}_z}{\delta} \| W_r \|_\infty \| \bm{u} \|_\infty + \frac{\bar{\sigma}_z}{\delta} \, \| b_r \|_\infty.
    \end{aligned}
    \end{equation}
    Combining \eqref{eq:single:iss:before_kbar} and \eqref{eq:single:iss:after_kbar},}
    it follows that system \eqref{eq:model:gru} is ISS with functions $\beta(\| \bar{x} \|_\infty, k) = \mu (1 - \delta)^k \| \bar{x} \|_\infty$, $\gamma_u(\| \bm{u} \|_\infty) = \frac{\bar{\sigma}_z}{\delta} \| W_r \|_\infty \| \bm{u} \|_\infty$, and $\gamma_b(\| b_r \|_\infty) = \frac{\bar{\sigma}_z}{\delta} \| b \|_\infty$.  $\hfill\blacksquare$
    \bigskip
    
\noindent \textbf{\emph{Proof of Theorem \ref{thm:single:deltaiss}}} 
    Let us indicate by $x_{aj}$ and $x_{bj}$ the $j$-th components of $x_a = x(k, \bar{x}_{a}, \bm{u}_a, b_r)$ and $x_b = x(k, \bar{x}_{b}, \bm{u}_b, b_r)$, respectively.
    For compactness, we denote $f_a = f(x_a, u_a)$, $z_a = z(x_a, u_a)$, and $r_a = r(x_a, u_a)$, and we adopt the same notation for $f_b$, $z_b$, and $r_b$.
    Moreover, let $\Delta x = x_a - x_b$, and $\Delta u = u_a - u_b$.
	From \eqref{eq:model:gru} it holds that
	\begin{align*}
		\Delta x_j^+ &= z_{aj} x_{aj} + (1 - z_{aj}) r_{aj} - z_{bj} x_{bj} - (1 - z_{bj}) r_{bj}.
	\end{align*}
	Summing and subtracting the terms $z_{aj} x_{bj}$ and $(1 - z_{aj}) r_{bj}$, and taking the absolute value of $\Delta x_j^+$, it follows that
	\begin{equation} \label{eq:single:deltaiss:xj_abs}
	\begin{aligned}
		\lvert \Delta x^+_j \lvert &\leq  z_{aj} \lvert \Delta x_j \lvert + \lvert z_{aj} - z_{bj}\lvert \Big[ \lvert x_{bj} \lvert + \left\lvert r_{bj} \right\lvert \Big] \\
			& \quad\, + (1 - z_{aj}) \big\lvert r_{aj} - r_{bj} \big\lvert .
	\end{aligned}
	\end{equation}

	\begin{subequations} \label{eq:single:deltaiss:lipschitz}
	In light of Assumption \ref{ass:single:initial_state}, Lemma \ref{lemma:single:invset} entails that 
	\begin{equation} \label{eq:single:deltaiss:lipschitz:x_bound}
	    \lvert x_{\{a, b\} j} \lvert \, \leq  \| x_{\{a, b\}} \|_\infty \leq {\widecheck{\lambda}}.
	\end{equation}
	Thus, it follows that the forget gate can be bounded as
	\begin{equation}  \label{eq:single:deltaiss:lipschitz:f_bound}
	    \lvert f_{\{a, b\}j} \lvert \, \leq \| f_{\{a, b\}} \|_\infty \leq{\widecheck{\sigma}}_f,
	\end{equation}
	where $ {\widecheck{\sigma}}_f$ is defined as in \eqref{eq:single:deltaiss:bounds:f}.
	By similar arguments, since $ {\widecheck{\sigma}}_f < 1$,  the term $r_{bj}$ can be bounded as
	\begin{equation}
		\lvert r_{\{a,b\}j} \lvert \, \leq \| r_{\{a,b\}} \|_\infty \leq{\widecheck{\phi}}_r,
	\end{equation}
	where $ {\widecheck{\sigma}}_r$ is defined as \eqref{eq:single:deltaiss:bounds:r}.
	Then, owing to the Lipschitzianity of $\sigma$,
	\begin{equation}
	\begin{aligned}
		\lvert z_{aj} - z_{bj} \lvert & \leq \| z_a - z_b \|_\infty \\
		 & \leq \frac{1}{4} \big\| W_z (u_a - u_b) + U_z (x_a - x_b) \big\|_\infty \\
		 &\leq \frac{1}{4} \Big[ \| W_z \|_\infty \| \Delta u \|_\infty + \| U_z \|_\infty \| \Delta x \|_\infty \Big],
	\end{aligned}
	\end{equation}
	and
	\begin{equation}
	\begin{aligned}
	\| f_a - f_b \|_\infty \leq \frac{1}{4} \Big[ \| W_f \|_\infty \| \Delta u \|_\infty + \| U_f \|_\infty \| \Delta x \|_\infty \Big].
	\end{aligned}
	\end{equation}
	Exploiting the Lipschitzianity of $\phi$, {and in light of \eqref{eq:single:deltaiss:lipschitz:x_bound} and \eqref{eq:single:deltaiss:lipschitz:f_bound}}, the following chain of inequalities hold
	\begin{equation}
	\begin{aligned}
		  \big\lvert r_{aj} &- r_{bj} \big\lvert  \leq \big\| r_{a} - r_{b} \big\|_\infty \\
		 & \leq  \| W_r (u_a - u_b) + U_r (f_a \circ x_a - f_b \circ x_b) \|_\infty \\
		 & \leq \| W_r \|_\infty \| \Delta u \|_\infty + \\
		 & \qquad  + \| U_r \|_\infty \| (f_a - f_b) \circ x_a  + f_b \circ(x_a - x_b) \|_\infty \\
		 & \leq \| W_r \|_\infty \| \Delta u \|_\infty + \\
		 & \qquad  + \| U_r \|_\infty \Big( \| f_a - f_b \|_\infty \| x_a \|_\infty + \| f_b \|_\infty \| \Delta x \|_\infty \Big) \\
		 & \leq \| W_r \|_\infty \| \Delta u \|_\infty + \| U_r \|_\infty \Big( \frac{1}{4} {\widecheck{\lambda}} \| W_f \|_\infty \| \Delta u \|_\infty + \\
		 & \qquad + \frac{1}{4} {\widecheck{\lambda}} \| U_f \|_\infty \| \Delta x \|_\infty +{\widecheck{\sigma}}_f \| \Delta x \|_\infty  \Big).
	\end{aligned}
	\end{equation}
	\end{subequations}
	
	Combining \eqref{eq:single:deltaiss:xj_abs} and \eqref{eq:single:deltaiss:lipschitz} we thus obtain
	\begin{equation} \label{eq:single:deltaiss:contractive}
		\lvert \Delta x_j^+ \lvert \leq \alpha_{\Delta x} \| \Delta x \|_\infty + \alpha_{\Delta u} \| \Delta u \|_\infty,
	\end{equation}
	where
	\begin{equation} \label{eq:single:deltaiss:alphas}
	\begin{aligned}
		\alpha_{\Delta x} =& z_{aj} + \frac{1}{4}( {\widecheck{\lambda}} + {\widecheck{\phi}}_r) \| U_z \|_\infty  + \\
		& \qquad + (1 - z_{aj}) \| U_r \|_\infty \Big(  \frac{1}{4} {\widecheck{\lambda}} \| U_f \|_\infty  +{\widecheck{\sigma}}_f  \Big), \\
		\alpha_{\Delta u}  =& \frac{1}{4} ({\widecheck{\lambda}} +{\widecheck{\phi}}_r) \| W_z \|_\infty  + \\ 
		& \qquad + (1 - z_{aj})  \Big( \| W_r \|_\infty +   \frac{1}{4} {\widecheck{\lambda}} \| U_r \|_\infty \| W_f \|_\infty  \Big).
	\end{aligned}
	\end{equation}
	Condition \eqref{eq:single:deltaiss:condition} implies that there exists $\delta\sss{\Delta} \in(0, 1)$ such that $\alpha_{\Delta x} \leq 1 - \delta_{\Delta}$ for any $z_{aj}$.
	{Indeed, consider the inequality $\alpha_{\Delta x} < 1$, i.e.
	\begin{equation*}
	    \begin{aligned}
	        z_{aj} &+ \frac{1}{4}( {\widecheck{\lambda}} + {\widecheck{\phi}}_r) \| U_z \|_\infty \\
	        & \quad + (1 - z_{aj}) \| U_r \|_\infty \Big(  \frac{1}{4} {\widecheck{\lambda}} \| U_f \|_\infty  +{\widecheck{\sigma}}_f  \Big) < 1.
	    \end{aligned}
	\end{equation*}
	Bringing $z_{aj}$ on the right-hand side, and dividing by $1-z_{aj}$, which is surely positive, one obtains
	\begin{equation*}
	    \frac{1}{4} \frac{{\widecheck{\lambda}} + {\widecheck{\phi}}_r}{1 - z_{aj}} \| U_z \|_\infty + \| U_r \|_\infty \Big(  \frac{1}{4} {\widecheck{\lambda}} \| U_f \|_\infty  +{\widecheck{\sigma}}_f  \Big) < 1.
	\end{equation*}
	Moving the first term on the right-hand side, it follows that
	\begin{equation} \label{eq:single:deltaiss:ineq_zaj}
	    \| U_r \|_\infty \Big(  \frac{1}{4} {\widecheck{\lambda}} \| U_f \|_\infty  +{\widecheck{\sigma}}_f  \Big) < 1 - \frac{1}{4} \frac{{\widecheck{\lambda}} + {\widecheck{\phi}}_r}{1 - z_{aj}} \| U_z \|_\infty.
	\end{equation}
	Condition \eqref{eq:single:deltaiss:condition} entails that \eqref{eq:single:deltaiss:ineq_zaj} holds, since $z_{aj} \in [ 1-\widecheck{\sigma}_z, \widecheck{\sigma}_z ]$. }
	
	Therefore, as $\alpha_{\Delta x} \leq 1 - \delta_{\Delta}$, \eqref{eq:single:deltaiss:contractive} becomes
	\begin{equation} \label{eq:single:deltaiss:x_inf}
		\| \Delta x^+ \|_\infty \leq (1 - \delta_{\Delta}) \| \Delta x \|_\infty + \widecheck{\alpha}_{\Delta u} \| \Delta u \|_\infty,
	\end{equation}  
	where $\widecheck{\alpha}_{\Delta u}$ is the supremum of $\alpha_{\Delta u}$, computed replacing the minimum value of $z_{aj}$, i.e. $z_{aj} = 1 -{\widecheck{\sigma}}_z$, in \eqref{eq:single:deltaiss:alphas}.
	Iterating \eqref{eq:single:deltaiss:x_inf}, it is possible to derive that
	\begin{equation}
		\| \Delta x(k) \|_\infty \leq (1 - \delta_{\Delta})^k \| \bar{x}_{a} - \bar{x}_b \|_\infty + \frac{\widecheck{\alpha}_{\Delta u}}{\delta_{\Delta}} \| \bm{u}_a - \bm{u}_b \|_\infty,
	\end{equation}
	i.e. the system is $\delta$ISS with functions $\beta_{  \Delta}(\| \bar{x}_a - \bar{x}_b \|_\infty, k) = (1 - \delta\sss{\Delta})^k \| \bar{x}_a - \bar{x}_b \|_\infty$ and $\gamma_{{ \Delta} u}(\| \bm{u}_a - \bm{u}_b \|_\infty) = \frac{\widecheck{\alpha}_{ \Delta u}}{\delta\sss{\Delta}} \| \bm{u}_a - \bm{u}_b \|_\infty$. $\hfill\blacksquare$
	
\section{Proofs for deep GRUs} \label{appendix:deep}
In the following, the proofs associated to the deep GRU model \eqref{eq:model:deepgru} are reported.
\bigskip

\noindent\textbf{\emph{Proof of Lemma \ref{lemma:deep:invset}}} 
    Consider the first layer ($i=1$).
    Since $u^1 = u \in \mathcal{U}$, Lemma \ref{lemma:single:invset} can be straightforwardly applied to the first layer.
    Owing to the first claim of Lemma \ref{lemma:single:invset}, {the state of the first layer is bounded by $ {\widecheck{\lambda}}^1_0 = \max ( \| \bar{x}^1 \|_\infty, 1)$, see \eqref{eq:single:invset:x_lessthan_x0}.}
    
    {Thus, the input of the second layer ($i=2$), i.e. $u^2 = x^{1, +}$, is bounded by ${\widecheck{\lambda}}^1_0$.
    Lemma \ref{lemma:single:invset} can be then applied to the second layer, provided that the bounds \eqref{eq:single:invset:bounds} are suitably inflated to account the fact that the second layer's input, $u_2$, is no more unity-bounded. 
    In particular, letting
    \begin{equation}
        {\widecheck{\lambda}}_k^i = \max(\| x^i(k) \|_\infty, 1),
    \end{equation}
    since $\| x^{i-1}(k+1) \| \leq \widecheck{\lambda}_k^{i-1}$ and $\| x^{i}(k) \|_\infty \leq \widecheck{\lambda}_k^{i}$, the bounds can be computed as follows
    \begin{subequations}\label{eq:deep:invset:bounds}
    \begin{align}
        \bar{\omega}^{i}(k) =& \sigma \big( \big\| \, W_z^i \, {\widecheck{\lambda}}^{i-1}_k \quad  U_z^i \, {\widecheck{\lambda}}^{i}_k \quad b_z^i \big\|_\infty \big), \\
        \ubar{\omega}^i(k) =& 1 - \bar{\omega}^i(k), \\
        \ubar{\varepsilon}^i(k) =& 1 -  \phi \big( \big\| \, W_r^i \, {\widecheck{\lambda}}^{i-1}_k \quad  U_r^i \, {\widecheck{\lambda}}^{i}_k \quad b_r^i \big\|_\infty \big),
    \end{align}
    \end{subequations}}
    {so that $\ubar{\omega}^i(k) \leq \lvert \omega_j^i(k) \lvert \leq \bar{\omega}^i(k)$ and $\lvert \eta^i(k) \lvert \leq 1 - \ubar{\varepsilon}^i(k)$.
    Lemma \ref{lemma:single:invset} can then be applied to the second layer. }
    Iterating this arguments to all layers, the first and third claim of the Lemma are proved.
    
    {It is worth noticing that the first claim implies that the state of each layer $x^i(k)$ converges into its invariant set $\mathcal{X}^i$. 
    Hence, if $\bar{x}^i \notin \mathcal{X}^i$, $ \| x^i(k) \|_\infty$ contracts until $x^i$ enters the invariant set, which means that
    \begin{equation}
        x^i(k) \leq {\widecheck{\lambda}}_k^i \leq {\widecheck{\lambda}}_{k-1}^i \leq ... \leq {\widecheck{\lambda}}_0^i.
    \end{equation}
    Since $\sigma$ and $\phi$ are strictly monotonically functions and ${\widecheck{\lambda}}_k^i$ decreases with $k$, it holds that for any $k$
    \begin{subequations}
        \begin{gather}
            0 < \ubar{\omega}^i(0) \leq \ubar{\omega}^i_j(k) \leq \omega^i(k) \leq \bar{\omega}^i(k) \leq \ubar{\omega}^i(0) < 1, \\
            \lvert \eta_j^i(k) \lvert \leq 1 -  \ubar{\varepsilon}^i(k) \leq 1 - \ubar{\varepsilon}^i(0) < 1,
        \end{gather}
    \end{subequations}
    where $\bar{\omega}^i(0)$, $\ubar{\omega}^i(0)$, and $\ubar{\varepsilon}^i(0)$, are known quantities, as they depend only on the (known) initial states.}
    The second claim is hence verified by taking $\bar{k} = \max_i \bar{k}^i$, where $\bar{k}^i$ is the maximum convergence time of the $i$-th layer, computed as in the proof of Lemma \ref{lemma:single:invset}. $\hfill\blacksquare$
    \bigskip
    
\noindent\textbf{\emph{Proof of Proposition \ref{thm:deep:deltaiss}}}

    Let $\bar{x}_{a}^i \in {\check{\mathcal{X}}}^i$ and $\bar{x}_b^i \in {\check{\mathcal{X}}}^i$ be the pair of initial states for layer $i \in \{ 1, ..., M \}$, and let $\bm{u}_a$ and $\bm{u}_b$ be the pair of input sequences.
	We denote by $x_{a}^i = x^i(k, \bar{x}_a, \bm{u}_a, b_r)$ the state of network \eqref{eq:model:deepgru} initialized in $\bar{x}_{a}$ and fed by $\bm{u}_a$.
	The same notation is adopted for $x_{b}^i = x^i(k, \bar{x}_b, \bm{u}_b, b_r)$.
	For compactness, we denote $f_a^i = f(x_a^i, u_a^i)$, $z_a^i = z(x_a^i, u_a^i)$, and $r_a^i = r(x_a^i, u_a^i)$, and 
	 $f_b^i$, $z_b^i$, and $r_b^i$ are defined likewise.
	Finally, we denote $\Delta x = [ \Delta x^{1 \prime}, ..., \Delta x^{M \prime}]^{\prime}$, where $\Delta x^{i} = x^i_a -  x^i_b$, and $\Delta u = u_a - u_b$.
	
	First, let us point out that, in light of Assumption \ref{ass:deep:initial_state} and Lemma \ref{lemma:deep:invset}, at any time instant $\| x^i_{\{a, b\}} \|_\infty \leq \| \bar{x}^i_{\{a,b\}} \|_\infty \leq {\widecheck{\lambda}}^i$.
	Then, in light of \eqref{eq:model:deepgru:input}, it follows that the gates are bounded as $\| f_{\{a,b\}}^i \|_\infty \leq{\widecheck{\sigma}}_f^i$, $\| z_{\{a,b\}}^i \|_\infty \leq{\widecheck{\sigma}}_z^i$, and $\| r_{\{a,b\}}^i \|_\infty \leq{\widecheck{\phi}}_r^i$, where the bounds are defined as in \eqref{eq:deep:deltaiss:sigmas}.
	
	Thus, for any layer $i$, applying the same chain of inequalities as \eqref{eq:single:deltaiss:xj_abs}-\eqref{eq:single:deltaiss:x_inf}, one can derive that
	\begin{equation} \label{eq:deep:deltaiss:x_inf}
		\lvert \Delta x_j^{i, +} \lvert  \leq \alpha_{\Delta x}^i \| \Delta x^i \|_\infty + \alpha_{\Delta u}^i \| u_a^i - u_b^i \|_\infty,
	\end{equation}
	where $\alpha_{\Delta x}^i$ and $\alpha_{\Delta u}^i$ are defined as 
	\begin{equation} \label{eq:deep:deltaiss:alphas}
	\begin{aligned}
		\alpha_{\Delta x}^i =& z_{aj}^i + \frac{1}{4}( {\widecheck{\lambda}}^i + {\widecheck{\phi}}_r^i ) \| U_z^i \|_\infty  + \\
		& \qquad + (1  -  z_{aj}^i) \| U_r^i \|_\infty  \Big(  \frac{1}{4} {\widecheck{\lambda}}^i \| U_f^i \|_\infty + {\widecheck{\sigma}}_f^i \Big), \\
		\alpha_{\Delta u}^i =& \frac{1}{4} ({\widecheck{\lambda}}^i +{\widecheck{\phi}}_r^i) \| W_z^i \|_\infty + \\
		& \qquad + (1  - z_{aj}^i) \Big( \| W_r^i \|_\infty +  \frac{1}{4} {\widecheck{\lambda}}^i \| U_r^i \|_\infty \| W_f^i \|_\infty  \Big).
	\end{aligned}
	\end{equation}
	In light of condition \eqref{eq:deep:deltaiss:condition}, for any layer $i$ there exists $\delta_{\Delta}^i \in (0, 1)$ such that $\alpha_{\Delta x}^i < 1 - \delta_{\Delta}^i$. 
	Denoting by ${\widecheck{\alpha}}_{\Delta u}^i$, the supremum of $\alpha_{\Delta u}^i$ and applying \eqref{eq:model:deepgru:input}, \eqref{eq:deep:deltaiss:x_inf} can be recast as
	\begin{equation} \label{eq:deep:deltaiss:matrixineq}
		\begin{bmatrix}
		\| \Delta x^{1, +} \|_\infty \\
		\vdots \\
		\| \Delta x^{M, +} \|_\infty \\
		\end{bmatrix} \leq A_{M \Delta} \begin{bmatrix}
		\| \Delta x^{1} \|_\infty \\
		\vdots \\
		\| \Delta x^{M} \|_\infty \\
		\end{bmatrix} + B_{M{ \Delta}} \| \Delta u \|_\infty,
	\end{equation}
	where 
	\begin{subequations}
		\begin{equation}
		A_{M\Delta} = \begin{bmatrix}
		    1 - \delta^1_{\Delta} & 0 & ... & 0 \\
	    	(1-\delta^1_{\Delta}) \widecheck{\alpha}_{ \Delta u}^2  &  (1-\delta^2_{\Delta}) & ... & 0 \\
	    	\vdots & & \ddots & \vdots \\
	    	(1 - \delta^1_{\Delta}) \prod_{h=2}^{M} \widecheck{\alpha}_{\Delta u}^h   & ... & ... & (1 - \delta^M_{\Delta})
	    	\end{bmatrix}
	   	\end{equation}
	   	and
	   	\begin{equation}
	   	    B_{M \Delta} = \begin{bmatrix}
			{\widecheck{\alpha}}_{\Delta u}^1\\
			\vdots \\
			\prod_{h=1}^{M} {\widecheck{\alpha}}_{\Delta u}^h
		\end{bmatrix}.
		\end{equation}
	\end{subequations}
	Iterating \eqref{eq:deep:deltaiss:matrixineq}, and taking the norm of both sides, we get
	\begin{equation} \label{eq:deep:deltaiss:normineq}
	\begin{aligned}
	    \| \Delta x(k) \|_\infty \leq& \| A_{M \Delta}^k \|_\infty \| \Delta \bar{x} \|_\infty \\
		& \quad + \bigg\|  \sum_{t=0}^{k-1} A^{k-t-1}_{M \Delta} B_{M \Delta} \| \Delta \bm{u} \|_\infty \bigg\|_\infty.
	\end{aligned}
	\end{equation}
	Being the matrix $A_{M \Delta}$ triangular, its maximum eigenvalue is $\tilde{\lambda}_{\Delta} = \max_i (1 - \delta_{\Delta}^i) \in (0, 1)$ , which implies that it is Schur stable. 
	{Hence, there exists $\mu_{\Delta} > 0$ such that $\| A^k_{M \Delta} \|_\infty \leq \mu_\Delta \tilde{\lambda}_\Delta^k$ {(see \cite{terzi2021learning})}. Inequality \eqref{eq:deep:deltaiss:normineq} can be hence bounded as
	\begin{equation}
	    \begin{aligned}
	        \| \Delta x(k) \|_\infty \leq& \mu_\Delta \tilde{\lambda}_\Delta^k \| \Delta \bar{x} \|_\infty \\
	        & \quad +  \mu_\Delta \sum_{t=0}^{k-1} \tilde{\lambda}_\Delta^{k-t-1} \| B_{M \Delta} \|_\infty \, \| \Delta \bm{u} \|_\infty \\
	        \leq& \mu_\Delta \tilde{\lambda}_\Delta^k \| \Delta \bar{x} \|_\infty + \frac{\mu_\Delta}{1-\tilde{\lambda}_\Delta} \| B_{M \Delta} \|_\infty \, \| \Delta \bm{u} \|_\infty.
	    \end{aligned}
	\end{equation}}

	The GRU is $\delta$ISS with  $\beta_{ \Delta}( \| \Delta x \|_\infty, k) = \mu_{\Delta} \tilde{\lambda}_{\Delta}^k \| \Delta \bar{x} \|_\infty$ and $\gamma_{\Delta u}(\| \Delta \bm{u} \|_\infty) = \frac{\mu_\Delta}{1-\tilde{\lambda}_\Delta} \| B_{M \Delta} \|_\infty \| \Delta \bm{u} \|_\infty$.      $\hfill\blacksquare$
\end{document}